\documentclass[osid.sty]{article} 

\usepackage[utf8]{inputenc} 

\usepackage{geometry} 
\geometry{a4paper} 
\usepackage{graphicx} 
\usepackage{amsmath}
\usepackage{amssymb}
\usepackage{booktabs} 
\usepackage{array} 
\usepackage{paralist} 
\usepackage{verbatim} 
\usepackage{subfig} 
\usepackage{float}
\usepackage{fancyhdr} 
\pagestyle{fancy} 
\lhead{}\chead{}\rhead{}
\lfoot{}\cfoot{\thepage}\rfoot{}

\usepackage{sectsty}
\allsectionsfont{\sffamily\mdseries\upshape} 
\usepackage[nottoc,notlof,notlot]{tocbibind} 
\usepackage[titles,subfigure]{tocloft} 


\newtheorem{lemma}{Lemma}

\newtheorem{proposition}{Proposition}

\newtheorem{remark}{Remark}

\newcommand{\qed}{\hfill $\Box$\medskip}


\newcommand{\CI}{{\mathbb{C}}}

\newcommand{\RI}{{\mathbb{R}}}

\newcommand{\HI}{{\mathbb{H}}}

\DeclareMathOperator{\e}{e}

\newcommand{\aop}{\hat{a}}
\newcommand{\cop}{\hat{a}^{\dag}}
\newcommand{\gao}[1]{\hat{#1}}
\newcommand{\gco}[1]{\hat{#1}^{\dag}}

\newcommand{\elr}[2]{\left\vert #1 \right\rangle \left\langle #2 \right\vert}
\newcommand{\bk}[2]{\left< #1\left\vert\right. #2\right>}
\newcommand{\kb}[2]{\left\vert#1\right\rangle\left\langle#2\right\vert}
\newcommand{\bra}[1]{\left< #1 \right\vert}
\newcommand{\ket}[1]{\left\vert #1 \right>}

\newcommand{\modu}[1]{\left| #1 \right|}
\newcommand{\bkmv}[3]{\left< #1 \right\vert#2 \left\vert #3 \right>}


\newtheorem{mydef}{Definition}[section]
\newtheorem{coro}{Corollary}[section]

\setlength{\parskip}{2mm}

\title{Entanglement of Identical Particles}

\author{F. Benatti\\{\footnotesize\it Department of Physics, University of Trieste, Italy \& INFN, Sezione di Trieste, email: benatti@ts.infn.it}\\[2ex]
R. Floreanini\\{\footnotesize\it INFN, Sezione di Trieste, email: floreanini@ts.infn.it}\\[2ex]
K. Titimbo\\{\footnotesize\it Department of Physics, University of Trieste, Italy \& INFN, Sezione di Trieste, email: titimbo@ts.infn.it}}

\begin{document}

\maketitle

\begin{abstract}
Unlike for bipartite states consisting of distinguishable particles, in the case of identical parties the notion of entanglement is still under debate.
In the following, we review two different approaches to the entanglement of systems consisting of two Bosons or Fermions; the first approach is based on the particle aspect typical of first quantization and identifies separable pure states as those that allow to
assign two orthogonal single particle vector states to both parties.
The second approach makes full use of the mode aspect of second quantization whereby separability can be formulated as absence of non-local correlation among two different sets of modes. While the first approach applies to pure states only, the second one is more general and characterizes generic entangled states. In the following, we shall show that the mode-based approach indeed contains the particle-based one.
\end{abstract}

\section{Introduction}

Quantum entanglement is perhaps the most intriguing consequence of the linear structure of quantum mechanics; it refers to statistical correlations among sub-systems of compound quantum systems that forbid the attribution of properties to the individual parties even if these are far apart from each other.
Since the EPR gedankenexperiment \cite{einapr1935}, the scientific perception of the phenomenon changed from being an argument against the compatibility of quantum mechanics with special relativity to becoming a physical resource usable to perform tasks that would be impossible in a completely classical world.
The properties of entangled states have now applications in as many different fields as quantum teleportation, quantum cryptography, and quantum computation.
Nevertheless, while quantum entanglement has a clear formulation in relation to distinguishable-particle systems \cite{horrhh2009}, the entanglement of identical particles is still lacking an agreed upon status \cite{Zanardi1}--\cite{bala}.

In this introduction we shall present the issue at stake by means of the simplest possible setting, that of a system consisting of $2$ two-level systems (two qubits) each of which described by the Hilbert space $\mathbb{C}^2$. These qubits will firstly be treated as distinguishable and then as identical.

Let $\ket{\leftrightarrow}$ and $\ket{\updownarrow}$ denote two orthogonal one qubit vector states. In the standard approach, the separability or entanglement of a two qubit vector state is judged with respect to the underlying tensor product structure of the Hilbert space $\mathbb{C}^2\otimes\mathbb{C}^2$. This tensor product structure embodies the a-priori knowledge of which is the first qubit and which the second one, together with their corresponding individual observables. Among them, beside describing a so-called pure state, any one-dimensional projection $P_\varphi=\kb{\varphi}{\varphi}$ also corresponds to the property that the system may be found in the state $\ket{\varphi}\in \mathbb{C}^2$. This property is possessed by the system if its state $\rho$ is such that
\begin{equation}
\label{1q0}
{\rm Tr}(\rho P_\varphi)=1 \,
\end{equation}
that is if and only if the density matrix $\rho$ describing the system state is exactly $P_\varphi$.

Individual observables are local with respect to the Hilbert space tensor product structure; indeed, single qubit operators are $2\times 2$ matrices $M_2(\mathbb{C})$ and these are embedded in the two qubit algebra of $4\times 4$ matrix algebra $M_4(\CI)=M_2(\mathbb{C})\otimes M_2(\mathbb{C})$
as elements of the sub-algebras $M_2(\mathbb{C})\otimes\mathbb{I}\subset M_4(\CI)$ and $\mathbb{I}\otimes M_2(\mathbb{C})\subset M_4(\CI)$.

With respect to such natural tensor product structures, separable vector states are identified as the tensor products of single particle vector states; simple instances being
\begin{equation}
\label{0}
\ket{\leftrightarrow}\otimes\ket{\updownarrow}\ ,\quad\ket{\updownarrow}\otimes\ket{\updownarrow}\ ,\quad
\frac{\ket{\leftrightarrow}\otimes\ket{\updownarrow}+\ket{\updownarrow}\otimes\ket{\updownarrow}}{\sqrt{2}}=
\frac{\ket{\leftrightarrow}+\ket{\updownarrow}}{\sqrt{2}}\otimes\ket{\updownarrow}\ ,
\end{equation}
while the following ones are simple local operators addressing each one of the particles, independently,
$$
P_1=\ket{\leftrightarrow}\bra{\leftrightarrow}\otimes \mathbb{I}\ ,\quad P_2=\mathbb{I}\otimes\ket{\updownarrow}\bra{\updownarrow}\ ,\quad
P_1\otimes P_2\ ,
$$
where $\mathbb{I}$ denotes the identity matrix.

If the compound system is in the state $\ket{\leftrightarrow}\otimes\ket{\updownarrow}$, the orthogonal projection $P_1$, $P_2$ and $P_1\otimes P_2$ correspond to possessed properties: $P_1$ to the first qubit being in the state $\ket{\leftrightarrow}$, $P_2$ to  the second qubit being in the state $\ket{\updownarrow}$.  Indeed, $\ket{\leftrightarrow}\otimes\ket{\updownarrow}$ is eigenstate of $P_{1,2}$ and $P_1\otimes P_2$ so that
$$
\bra{\leftrightarrow}\otimes\bra{\updownarrow}\Big(P_1\otimes\mathbb{I}\Big)\ket{\leftrightarrow}\otimes\ket{\updownarrow}=
\bra{\leftrightarrow}\otimes\bra{\updownarrow}\Big(\mathbb{I}\otimes P_2\Big)\ket{\leftrightarrow}\otimes\ket{\updownarrow}=
\bra{\leftrightarrow}\otimes\bra{\updownarrow}\Big(P_1\otimes P_2\Big)\ket{\leftrightarrow}\otimes\ket{\updownarrow}=1\ .
$$
Instead, any sum of local operators, like the so-called $CNOT$ operation,
$$
U_{CNOT}=\ket{\leftrightarrow}\bra{\leftrightarrow}\otimes \mathbb{I}+\ket{\updownarrow}\bra{\updownarrow}\otimes
\Big(\ket{\leftrightarrow}\bra{\updownarrow}+\ket{\updownarrow}\bra{\leftrightarrow}\Big) \ ,
$$
cannot be reduced to a single tensor product.
Acting on the third separable state in (\ref{0}), $U_{CNOT }$ creates the (maximally) entangled state
\begin{equation}
\label{1}
\ket{\Psi}=U_{CNOT}\frac{\ket{\leftrightarrow}+\ket{\updownarrow}}{\sqrt{2}}\otimes\ket{\updownarrow}
=\frac{1}{\sqrt{2}}\Big(\ket{\leftrightarrow}\otimes\ket{\updownarrow} + \ket{\updownarrow}\otimes\ket{\leftrightarrow}\Big)\ .
\end{equation}Unlike separable vector states, $\ket{\Psi}$ is such that none of the properties associated with  $P_1\otimes\mathbb{I}$, $\mathbb{I}\otimes P_2$ and $P_1\otimes P_2$ can be attributed to the system in such a state; in fact,
$$
\bra{\Psi}P_1\otimes\mathbb{I}\ket{\Psi}=\bra{\Psi}\mathbb{I}\otimes P_2\ket{\Psi}=\bra{\Psi}P_1\otimes P_2\ket{\Psi}=\frac{1}{2}\ .
$$
In full generality, the projections $P_1\otimes\mathbb{I}$, $\mathbb{I}\otimes P_2$ and  $P_1\otimes P_2$, where $P_{1,2}$ project onto not necessarily orthogonal single particle vector states $\ket{\varphi_1}$, $\ket{\varphi_2}$, describe  possessed properties by a two qubit system in a vector state $\ket{\Psi}$ if and only if $\ket{\Psi}=\ket{\varphi_1}\otimes\ket{\varphi_2}$, that is if the vector state is separable. This follows at once from the Cauchy-Schwartz inequality that  must be saturated by $\rho=\kb{\Psi}{\Psi}$
in order to satisfy \eqref{1q0}.
To summarize, the entangled vector states of two distinguishable particles are such that no individual properties can be attributed to each one or both of the constituent parties.

Suppose now the two qubits to be identical. The fact that the two parties cannot be distinguished implies that their vector states must be symmetric in the single particle states for Bosons and anti-symmetric for Fermions. With reference to the single particle orthonormal basis $\{\ket{\leftrightarrow},\ket{\updownarrow}\}$ the Bosonic sector of the Hilbert space $\mathbb{C}^2\otimes\mathbb{C}^2$ is three dimensional and linearly spanned by the symmetric orthogonal vectors
\begin{equation}
\label{states}
\ket{\leftrightarrow}\otimes\ket{\leftrightarrow}\ ,\quad
\ket{\updownarrow}\otimes\ket{\updownarrow}\ ,\quad
\ket{\Psi}=\frac{1}{\sqrt{2}}\Big(\ket{\leftrightarrow}\otimes\ket{\updownarrow} + \ket{\updownarrow}\otimes\ket{\leftrightarrow}\Big)\ ,
\end{equation}
while the Fermionic sector is one dimensional and given by the anti-symmetric vector
$$
\ket{\Phi}=\frac{1}{\sqrt{2}}\Big(\ket{\leftrightarrow}\otimes\ket{\updownarrow} - \ket{\updownarrow}\otimes\ket{\leftrightarrow}\Big)\ .
$$
Note that particle identity does not exclude that individual properties might be attributed to the two parties; the only constraint is that no property can be attributed to a definite party otherwise it could be used to distinguish it from the other.
Nevertheless, single-particle properties are still described by one-dimensional projections; therefore, in order to implement the latter constraint, the property corresponding to one qubit being in the generic state $\ket{\varphi}$, must be represented by the symmetric (not one-dimensional) projection
$$
\mathcal{E}_{P} = P\otimes \Big(\mathbb{I}-P\Big) + \Big(\mathbb{I}-P\Big)\otimes P + P \otimes P\  ,\quad P=\ket{\varphi}\bra{\varphi}\ .
$$
Similarly, the projector corresponding to the two qubits possessing two different properties must be
$$
P^{symm}_{12}=P_1\otimes P_2 + P_2\otimes P_1\ ,\qquad P_{1}=\kb{\varphi_{1}}{\varphi_{1}}\ ,\quad 
P_{2}=\kb{\varphi_{2}}{\varphi_{2}}\ .
$$
This is a projection if and only if $P_1P_2=0$, whence $P^{symm}_{12}=\mathcal{E}_{P_1}\mathcal{E}_{P_2}$.
Therefore, despite the formally entangled structure of a state as $\ket{\Psi}$, properties can be attributed to both its parties. Indeed,
\begin{equation}
\label{symprop2}
\bra{\Psi}\mathcal{E}_{P_1}\ket{\Psi}=\bra{\Psi}\mathcal{E}_{P_2}\ket{\Psi}=\bra{\Psi}P^{sym}_{12}\ket{\Psi}=1 \ ;
\end{equation}
however, it is not known which party has which property.
In \cite{ghigmw2002,GhiMar}, the separability of  Bosonic and Fermionic $2$-particle vector states is identified with the possibility of attributing properties to both parties. It was then proved that separable vector states of two identical particles are either symmetric or anti-symmetric combinations of tensor products of orthogonal single particle vector states.

The above approach is pursued within a first quantization context, while in \cite{benffm2010,benffm2012} a second quantized point of view is taken
whereby separability and entanglement are defined with reference to the algebraic structure of Bose and Fermi systems rather than in relation to the attribution of individual properties. As already stressed, being linearly spanned by either symmetric (Bosons) or anti-symmetric (Fermions) tensor products of single particle vector states, Bosonic or Fermionic Hilbert spaces are no longer embodied with an {\it a priori} tensor product structure.
However, one observes that, especially in many body quantum systems, the primary object is the algebra of operators rather than its representations on particular Hilbert spaces. 

In the case of identical particles, the algebra of operators $\mathcal{A}$ is constructed by means of polynomials in annihilation and creation operators $\hat{a}_i$, $\hat{a}_i^\dag$, $i$ running over a set $I$ numbering the elements of an orthonormal basis in the single particle Hilbert space $\HI$, or an associated choice of modes~\footnote{\label{footnote1}
Actually, since the Bose creation and annihilation operators are unbounded, one has to resort to the so-called Weyl algebra generated by exponential $\exp(\alpha\,\hat{a}(f)+\alpha^*\,\hat{a}^\dag(f))$ whence polynomials are generated by differentiating with respect to the complex parameter $\alpha$ \cite{BraRob}.}. They satisfy the canonical commutation relations (CCR)
\begin{equation}
\label{CCR}
[\hat{a}_i\,,\,\hat{a}_j^\dag]=\hat{a}_i\,\,\hat{a}_j^\dag\,-\,\hat{a}^\dag_j\,\hat{a}_i=\delta_{ij}\ ,
\end{equation}
in the Bosonic case, and, in the Fermionic case, the canonical anti-commutation relations (CAR)
\begin{equation}
\label{CAR}
\{\hat{a}_i\,,\,\hat{a}_j\}=\hat{a}_i\,\hat{a}_j^\dag\,+\,\hat{a}_j^\dag\,\hat{a}_i=\delta_{ij}\ .
\end{equation}

As already seen, in the case of two distinguishable qubits, the algebraic structure 
$M_2(\mathbb{C})\otimes M_2(\mathbb{C})$ selects $M_2(\mathbb{C})\otimes\mathbb{I}$ and $\mathbb{I}\otimes M_2(\mathbb{C})$ as natural sub-algebras thereby identifying local operators with tensor products $A_1\otimes A_2$. The salient feature here is the fact that $A_1\otimes \mathbb{I}$ and $\mathbb{I}\otimes A_2$ commute: this corresponds to their (algebraic) independence since commutativity excludes mutual influences when two of these observables are simultaneously measured.

In the second quantization formalism, pairs of mutually commuting sub-algebras can easily be constructed by partitioning the index set $I$ into two disjoint subsets $I_{1,2}$.
From \eqref{CCR}, the two sub-algebras $\mathcal{A}_{1,2}$ generated by  $\{a_i,a^\dag_j\}_{(i,j)\in I_1}$, respectively $\{a_i,a^\dag_j\}_{(i,j)\in I_2}$ commute for Bosons. For Fermions, from \eqref{CAR}, one sees that they do so only if at least one of the them is generated by polynomials of even degree, that is
only if at least one of the two sub-algebras $\mathcal{A}_{1,2}$ is \textit{even}.
Indeed, if $\mathcal{A}_1$ is an even algebra, the fact that $[A_1\,,\,A_2]=0$ for all $A_{1,2}\in\mathcal{A}_{1,2}$ follows from the relation
\begin{equation}
\label{CAR-CCR}
[AB\,,\,C]\,=\,A\,\{B\,,\,C\}-\{A\,,\,C\}\,B\ ,
\end{equation}
while nothing can be said in general of commutators like $[\hat{a}_i\,,\,\hat{a}_j^\dag]$ by knowing that $\{\hat{a}_i\,,\,\hat{a}_j^\dag\}=0$.
With this proviso, one can extend the notion of locality to a system of identical particles through the following definitions:%
\footnote{In the case of Fermions, a more general definition of bipartition is possible, not requiring that the two subalgebras $\mathcal{A}_{1,2}$ commute. Nevertheless, the notion of pure state separability that follows
is still characterized by the condition (\ref{3}) below. For more details, see \cite{benffm2013}.}
\begin{enumerate}
\item
a bipartition $(\mathcal{A}_1,\mathcal{A}_2)$ of the algebra of operators $\mathcal{A}$ is any pair of commuting sub-algebras $\mathcal{A}_1$, $\mathcal{A}_2$;
\item
an operator $A\in \mathcal{A}$ is said to be local with respect to the bipartition 
$(\mathcal{A}_1,\mathcal{A}_2)$, if it is of the form
$A=A_1A_2$, where $A_1\in\mathcal{A}_1$ and $A_2\in\mathcal{A}_2$.
\end{enumerate}
Then, one defines separable with respect to a bipartition $(\mathcal{A}_1,\mathcal{A}_2)$ those $\ket{\Psi}$ such that
\begin{equation}
\label{3}
\bra{\Psi}A_1A_2\ket{\Psi}=\bra{\Psi}A_1\ket{\Psi}\,\bra{\Psi}A_2\ket{\Psi}\ ,
\end{equation}
for all possible $A_{1,2}$ belonging to the commuting sub-algebras $\mathcal{A}_{1,2}$, in the case of a Boson system, while $A_{1,2}$ must belong to the even components $\mathcal{A}^{ev}_{1,2}$ in the case of
a Fermi system (see Section \ref{sec2.2} for further details on this point).

Evidently, in the second quantized approach the notion of local operators and separable states, hence of non-local operators and entangled states, depend on the reference pair of commuting sub-algebras.
As we will see in the following, such a definition of separability is by no means restricted to vector states, but extends to mixed states.
Furthermore, in the case of two distinguishable qubits, separable bipartite vector states  are those assigning factorized mean
values to local operators $A_1\otimes A_2$; {\rm i.e.} \eqref{3} reduces to the known definition of separable vector states for distinguishable qubits.

As an illustration of the above approach, consider two identical qubits with Bosonic character; in the second quantization formalism, the single particle orthonormal
states $\ket{\leftrightarrow}$ and $\ket{\updownarrow}$ correspond to two possible Bosonic modes and are generated by acting on the vacuum state $\ket{0}$ with creation operators,
$\gco{a}_1$ and $\gco{a}_2$, i.e.,
$$
\gco{a}_1\ket{0} = \ket{\leftrightarrow}\ ,\quad \gco{a}_2\ket{0} = \ket{\updownarrow}\ ,
$$
the Bosonic character being expressed by
$$
[\gao{a}_1\,,\,\gco{a}_1]=[\gao{a}_2\,,\,\gco{a}_2]=1\ ,\quad [\gao{a}_1\,,\,\gco{a}_2]=[\gao{a}_2\,,\,\gco{a}_1]=0\ .
$$
This is the simplest possible non-trivial Boson system: two Bosons, each of them having two orthonormal states (or modes) at disposal;
notice that, due to the Bosonic character of the degrees of freedom, the Hilbert space is $3$ dimensional and linearly spanned by the
vectors
$$
\ket{\leftrightarrow,\leftrightarrow}=\frac{(\gco{a}_1)^2}{\sqrt{2}}\ket{0}\ ,\quad\,
\ket{\updownarrow,\updownarrow}=\frac{(\gco{a}_2)^2}{\sqrt{2}}\ket{0}\, \quad\ket{\Psi}=\gco{a}_1 \gco{a}_2 \ket{0}\ ,
$$
which are the second quantized version of (\ref{states}).

All operators in the algebra $\mathcal{A}_1$ generated by the polynomials in $\gao{a}_1$ and $\gco{a}_2$ and those in the algebra $\mathcal{A}_2$ generated by polynomials in $\gao{a}_2$ and $\gco{a}_2$ commute. It thus follows that the action of $\gco{a}_1\gco{a}_2$ on the vacuum is local with respect to the two sub-algebras $\mathcal{A}_1$ and $\mathcal{A}_2$.

According to the physical intuition outlined above, the state resulting from a local action on the vacuum (which is separable since there is nothing in the vacuum that can be entangled) should result separable from the point of view of the sub-algebras $\mathcal{A}_{1,2}$.
The separability of $\ket{\Psi}=\hat{a}_1^\dag\hat{a}^\dag_2\ket{0}$ with respect to the bipartition $(\mathcal{A}_1,\mathcal{A}_2)$ is confirmed by the absence of correlations between $\mathcal{A}_{1,2}$ carried by such a state: (\ref{3}) is indeed satisfied. This can be seen as follows: polynomials in creation and annihilation $P(\gao{a}_i,\gco{a}_i)$ can always be written as sums of monomials of the form $(\gao{a}_i)^p(\gco{a}_i)^q$. On the other hand, from the CCR,
\begin{eqnarray*}
\bra{\Psi}\gao{a}_1^p(\gco{a}_1)^q\,\gao{a}_2^m(\gco{a}_2)^n\ket{\Psi}&=&\bra{0}\gao{a}_1^{p+1}(\gco{a}_2)^{q+1}\,\gao{a}_2^{m+1}(\gco{a}_2)^{n+1}\ket{0}\\
&=&
\bra{0}\gao{a}_1^{p+1}(\gco{a}_1)^{q+1}\ket{0}\,\bra{0}\gao{a}_2^{m+1}(\gco{a}_2)^{n+1}\ket{0}\\
&=&
\bra{\Psi}\gao{a}_1^{p}(\gco{a}_1)^{q}\ket{\Psi}\,\bra{\Psi}\gao{a}_2^{m}(\gco{a}_2)^{n}\ket{\Psi}\ .
\end{eqnarray*}
Therefore, the state $\ket{\Psi}$ in (\ref{1}), which is maximally entangled for distinguishable qubits, is instead separable for identical (Bosonic) qubits also in the second quantization approach.

While the second quantized approach applies to generic states of identical particle systems, the first quantized approach, based on the particle aspect of first quantization, does not cover the case of mixed states.  The goal of the present work is to provide a detailed comparison of the first and second quantized approaches in the case of vector states of identical particle systems.
We will thus give first a comprehensive introduction to the two methods that have so far been sketched, extending the second quantization approach to cover the Fermionic case; finally, we show that the first quantized particle-based approach is contained in the second quantized mode-based approach.

\section{Pure State Bipartite Entanglement for Identical Particles}
\label{sec2}

As outlined in the introduction, in composite quantum systems the standard notion of locality that identifies separable and entangled states relies on the tensor product structure of the Hilbert space. In the case of indistinguishable particles such structure is no longer available due to the required symmetrization or anti-symmetrization of vector states.

\subsection{First Quantization}
\label{sec2.1}

The question addressed in \cite{ghigmw2002,GhiMar} is whether two-particle Bosonic and Fermionic vector states are automatically
entangled because of their non-tensor product structure. The answer is: never when these arise from the symmetrization (Bosons) and anti-symmetrization (Fermions) of tensor products of orthogonal single particle vector states.
The argument on which the answer is based follows from the possibility of identifying properties objectively possessed by the individual parties.

For distinguishable particles in factorized,  separable vector states $\ket{\Phi}=\ket{\varphi_1}\otimes\ket{\varphi_2}$, a well-defined state vector $\ket{\varphi_{1,2}}$ is assigned to each component sub-system; since such states are simultaneous eigenstates of a
complete set of commuting observables, it is possible to predict with certainty the measurement outcomes of this set of operators. These outcomes constitute a complete set of properties that can be legitimately thought as possessed by each particle. Indeed, as already discussed in the introduction, the single particle projections $P_{1,2}=\kb{\varphi_{1,2}}{\varphi_{1,2}}$ are such that the projections $P_1\otimes\mathbb{I}$, $\mathbb{I}\otimes P_2$ and $P_1\otimes P_2$ all have mean value $1$ with respect to $\ket{\Phi}$. Namely these properties are attained with probability $1$.
On the other hand, when the system is in an entangled vector state, definite state vectors cannot be associated with any constituent and, therefore, one cannot claim that the parties objectively possess a complete set of properties whatsoever.

In order to extend these arguments to a composite system $\mathcal{S}$ consisting of two identical particles described by the pure normalized state $\ket{\Psi}$, one must note that a complete set of properties cannot be attributed to a definite party, otherwise these properties would distinguish it from the other, identical constituent. The only  meaningful assertion is that one of the two parties has a complete set of properties;
if this latter corresponds to a single particle one dimensional projection operator $P$, the assertion is mathematically translated into
\begin{equation}
\label{symmprop2}
\bkmv{\Psi}{\mathcal{E}_{P}}{\Psi} =1\ ,\quad
\mathcal{E}_{P}=P\otimes (\mathbb{I}-P) + (\mathbb{I}-P)\otimes P +P\otimes P = P\otimes \mathbb{I} + \mathbb{I}\otimes P - P \otimes P\ .
\end{equation}
Consider the first expression of the operator in \eqref{symmprop2}: it  is a symmetric projection where the first term refers to the first particle, and not the second, having the complete set of properties associated with $P$,
the second term has the roles of the two particles exchanged and the third term refers to both particles having the considered complete set of properties.
Since the three terms are mutually orthogonal, the condition $\bkmv{\Psi}{\mathcal{E}_P}{\Psi}=1$ implies that at least one particle has the complete set of  properties described by the single particle projection $P$.

\begin{mydef}
\label{gs}
The identical constituents $\mathcal{S}_{1}$ and $\mathcal{S}_{2}$ of a composite quantum system $\mathcal{S}=\mathcal{S}_{1}+\mathcal{S}_{2}$ are separable when both constituents possess a complete set of properties.
\end{mydef}

The fact that one constituent possesses a complete set of properties has immediate consequences on the form of the state $\ket{\Psi}$.

\begin{proposition}
One of the identical constituents of a composite quantum system $\mathcal{S}$, described by the pure normalized state $\ket{\Psi}$ has a complete set of properties associated with a one-dimensional projector $P=\elr{\phi_0}{\phi_0}$ onto a single-particle vector state if and only if $\ket{\Psi}$ is obtained by symmetrizing (Bosons) or anti-symmetrizing (Fermions) a tensor product vector state.
\end{proposition}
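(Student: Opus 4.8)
The plan is to make the condition $\bkmv{\Psi}{\mathcal{E}_P}{\Psi}=1$ transparent by rewriting the symmetric projector. Setting $Q=\mathbb{I}-P$ for the projection onto the orthogonal complement of $\ket{\phi_0}$ in $\HI$, a one-line expansion yields the identity
\[
\mathcal{E}_P=P\otimes\mathbb{I}+\mathbb{I}\otimes P-P\otimes P=\mathbb{I}\otimes\mathbb{I}-Q\otimes Q\ .
\]
Since $Q\otimes Q$ is again an orthogonal projection and $\ket{\Psi}$ is normalized, $\bkmv{\Psi}{\mathcal{E}_P}{\Psi}=1$ is equivalent to $\bkmv{\Psi}{Q\otimes Q}{\Psi}=\|(Q\otimes Q)\ket{\Psi}\|^2=0$, i.e. to $(Q\otimes Q)\ket{\Psi}=0$. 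In words, one constituent possesses the property $P$ precisely when $\ket{\Psi}$ has no component in the subspace $Q\HI\otimes Q\HI$ in which neither particle occupies $\ket{\phi_0}$.

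First I would use the orthogonal splitting $\HI=\CI\ket{\phi_0}\oplus Q\HI$ to decompose
\[
\HI\otimes\HI=\big(\CI\ket{\phi_0}\otimes\CI\ket{\phi_0}\big)\oplus\big(\CI\ket{\phi_0}\otimes Q\HI\big)\oplus\big(Q\HI\otimes\CI\ket{\phi_0}\big)\oplus\big(Q\HI\otimes Q\HI\big)\ .
\]
The condition $(Q\otimes Q)\ket{\Psi}=0$ just deletes the last summand, so the most general admissible state is
\[
\ket{\Psi}=\alpha\,\ket{\phi_0}\otimes\ket{\phi_0}+\ket{\phi_0}\otimes\ket{u}+\ket{v}\otimes\ket{\phi_0}\ ,
\]
with $\alpha\in\CI$ and $\ket{u},\ket{v}\in Q\HI$, hence orthogonal to $\ket{\phi_0}$.

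It then remains to impose the exchange statistics. For Bosons $\ket{\Psi}$ is invariant under the swap of the two factors; matching components in the decomposition above forces $\ket{u}=\ket{v}$ (with $\alpha$ unconstrained), and $\ket{\Psi}$ is recognized, up to normalization, as the symmetrization of $\ket{\phi_0}\otimes\big((\alpha/2)\ket{\phi_0}+\ket{u}\big)$. For Fermions $\ket{\Psi}$ changes sign under the swap; the same matching forces $\alpha=0$ and $\ket{v}=-\ket{u}$, so that $\ket{\Psi}=\ket{\phi_0}\otimes\ket{u}-\ket{u}\otimes\ket{\phi_0}$ is the anti-symmetrization of $\ket{\phi_0}\otimes\ket{u}$ with $\ket{u}\perp\ket{\phi_0}$ (the orthogonality being the expected imprint of the Pauli principle). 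In both cases $\ket{\phi_0}$ is one of the factors of the (anti)symmetrized product, which is the claimed form. The converse is immediate: applying $Q\otimes Q$ to the (anti)symmetrization of $\ket{\phi_0}\otimes\ket{\eta}$ gives zero for every $\ket{\eta}$ because $Q\ket{\phi_0}=0$, so any such state satisfies $\bkmv{\Psi}{\mathcal{E}_P}{\Psi}=1$.

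The whole argument hinges on the opening reformulation $\mathcal{E}_P=\mathbb{I}\otimes\mathbb{I}-Q\otimes Q$: once the opaque requirement that a sum of three mutually orthogonal projections attain unit expectation is converted into the single geometric condition $(Q\otimes Q)\ket{\Psi}=0$, the remainder is routine linear algebra dictated by the (anti)symmetry. I expect the only delicate bookkeeping to be the component-by-component comparison under the swap, and---in the Fermionic case---the observation that $\alpha=0$ together with $\ket{u}\perp\ket{\phi_0}$ is exactly what keeps the anti-symmetrized product from collapsing to zero.
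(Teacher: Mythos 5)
Your proof is correct and takes essentially the same route as the paper's: both reduce $\bkmv{\Psi}{\mathcal{E}_P}{\Psi}=1$ to the vanishing of the component of $\ket{\Psi}$ in the subspace where neither particle occupies $\ket{\phi_0}$ --- your condition $(Q\otimes Q)\ket{\Psi}=0$ is exactly the paper's requirement $c_{ij}=0$ for all $i,j\neq 0$ stated in basis-free form --- and then impose the exchange (anti)symmetry to recognize the (anti)symmetrized product. Your identity $\mathcal{E}_P=\mathbb{I}\otimes\mathbb{I}-Q\otimes Q$ is merely a cleaner packaging of that same step, with the added merit that you make explicit the converse direction, which the paper leaves implicit.
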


\begin{proof}
A general two-particle vector state can be written as
$$
\ket{\Psi}=c_{00} \ket{\phi_{0}}\otimes \ket{\phi_0} + \sum_{j>0} c_{0j} \ket{\phi_0}\otimes\ket{\phi_{j}} + \sum_{i>0} c_{i0}\ket{\phi_{i}}\otimes\ket{\phi_0} + \sum_{i,j \neq 0} c_{ij}\ket{\phi_{i}}\otimes\ket{\phi_{j}}\ ,
$$
where $\{ \ket{\phi_i} \}_{i\geq 0}$ is an orthonormal basis in the single particle Hilbert space with first element exactly the selected state corresponding to the complete set of properties.
From $\bkmv{\Psi}{\mathcal{E}_P}{\Psi}=1$ it follows that $\mathcal{E}_P \ket{\Psi}=\ket{\Psi}$,
which in turn can hold if and only if $c_{ij}=0$, $\forall i,j \neq 0$.
Moreover, particle identity implies $c_{0j}=\pm c_{j0}$, and normalization requires
$\modu{c_{00}}^2+2\sum_{j>0} \modu{c_{0j}}^2 =1$.
Thus,
\begin{eqnarray}
\nonumber
\ket{\Psi} &=& c_{00}\ket{\phi_{0}}\otimes\ket{\phi_0} + \ket{\phi_{0}}\otimes\left( \sum_{j>0} c_{0j} \ket{\phi_{j}}\right) + \left(\sum_{i>0} c_{i0}\ket{\phi_{i}}\right)\otimes \ket{\phi_0}\\
\label{eq:state}
&=& \ket{\phi_0}\otimes \left( \dfrac{c_{00}}{2}\ket{\phi_0} + \sum_{j>0} c_{0j} \ket{\phi_{j}} \right) +
\left( \dfrac{c_{00}}{2}\ket{\phi_0} + \sum_{j>0} c_{j0} \ket{\phi_{j}} \right)\otimes\ket{\phi_{0}}\ .
\end{eqnarray}
We now distinguish two cases.
\begin{itemize}
\item
Fermions: $c_{00}=0$ since there cannot be two Fermions in a same state. Then, $\ket{\Psi}$ is the anti-symmetrization of a tensor product vector state,
\begin{equation}
\label{eq:statefer}
\ket{\Psi}=\dfrac{1}{\sqrt{2}} \Big( \ket{\phi_{0}}\otimes\ket{\Upsilon} - \ket{\Upsilon}\otimes\ket{\phi_{0}} \Big)\ ,\quad
\ket{\Upsilon} = \sqrt{2}\sum_{j>0} c_{0j} \ket{\phi_{j}} \ ,
\end{equation}
where $\ket{\Upsilon}$ is orthogonal to $\ket{\phi_0}$.
\item
Bosons: using \eqref{eq:state}, $\ket{\Psi}$ results from the symmetrization of a tensor product vector state,
\begin{eqnarray}
\label{eq:statebos}
\ket{\Psi} &=& \sqrt{\dfrac{2-\modu{c_{00}}^2}{4}} \Big( \ket{\phi_{0}}\otimes\ket{\Theta} + \ket{\Theta}\otimes\ket{\phi_{0}} \Big)\\
\label{eq:statebos1}
\ket{\Theta} &=& \sqrt{\dfrac{4}{2-\modu{c_{00}}^2}} \left( \dfrac{c_{00}}{2}\ket{\phi_0} + \sum_{j>0} c_{j0} \ket{\phi_{j}} \right) \ .
\end{eqnarray}
\end{itemize}
\qed
\end{proof}

The previous proposition states that the symmetrization or anti-symmetrization of two-particle tensor product states do not forbid the attribution of a complete set of properties to one of the two parties, but only that we cannot establish by which one is the complete set of properties possessed.
This is not yet an assertion of separability of the state $\ket{\Psi}$ which, according to definition \ref{gs}, must be such that $\bkmv{\Psi}{\mathcal{E}_{Q}}{\Psi} =1$ for another projection $\mathcal{E}_Q$ depending on a one-dimensional single-particle projection $Q$, in general different from $P$: only in this case both parties can be declared to possess a complete set of properties, without, of course, being possible to establish which one has which property.
\bigskip

\begin{remark}
\label{rem1}
The latter statement requires that the single-particle one-dimensional projections $P$ and $Q$ be orthogonal,
for $P\neq Q$. In fact, only in this case, the product
$\mathcal{E}_P\mathcal{E}_Q$ is also a projection, equal to  $P\otimes Q+Q\otimes P$.
\end{remark}

\begin{itemize}
\item
Fermions: the structure in \eqref{eq:statefer} identifies $P$, respectively $Q$, as $P=\elr{\phi_0}{\phi_0}$, respectively $Q=\elr{\Upsilon}{\Upsilon}$.
It follows that, relative to the state $\ket{\Psi}$, one Fermion possesses the complete set of properties associated with $P$ and the other one the complete set of properties associated with the projection $Q$ orthogonal to $P$, so that $\ket{\Psi}$ is separable.
\begin{coro}
A vector state $\ket{\Psi}$ of two Fermions is separable if and only if $\ket{\Psi}$ is obtained by anti-symmetrizing a tensor product of two orthogonal single-particle vector states.
\end{coro}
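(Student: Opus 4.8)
The plan is to read the corollary essentially off the preceding Proposition, since that result already carries almost all the content. The one structural fact I would highlight at the outset is that, for Fermions, the normal form \eqref{eq:statefer} delivered by the Proposition is \emph{automatically} an antisymmetrization of two \emph{orthogonal} single-particle states: by construction $\ket{\Upsilon}=\sqrt{2}\sum_{j>0}c_{0j}\ket{\phi_{j}}$ lies in the span of $\{\ket{\phi_j}\}_{j>0}$ and is therefore orthogonal to $\ket{\phi_0}$. Thus the real task is only to match the Proposition's hypothesis to Definition \ref{gs} and to verify that the required orthogonality lets both parties carry a complete set of properties at once.

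For the implication from separability to the antisymmetric form I would argue as follows. If $\ket{\Psi}$ is separable, then by Definition \ref{gs} both constituents possess a complete set of properties; in particular at least one does, say the one associated with a one-dimensional projection $P$, so that $\bkmv{\Psi}{\mathcal{E}_P}{\Psi}=1$. The Proposition then forces $\ket{\Psi}$ into the shape \eqref{eq:statefer}, and since $\ket{\Upsilon}\perp\ket{\phi_0}$ this is precisely an antisymmetrization of a tensor product of two orthogonal single-particle vectors, as claimed.

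For the converse I would suppose $\ket{\Psi}=\frac{1}{\sqrt{2}}\Big(\ket{\chi_1}\otimes\ket{\chi_2}-\ket{\chi_2}\otimes\ket{\chi_1}\Big)$ with $\ket{\chi_1}\perp\ket{\chi_2}$, and set $P=\elr{\chi_1}{\chi_1}$ and $Q=\elr{\chi_2}{\chi_2}$. A direct computation using $P\ket{\chi_1}=\ket{\chi_1}$, $P\ket{\chi_2}=0$ (and symmetrically for $Q$) gives $\mathcal{E}_P\ket{\Psi}=\ket{\Psi}$ and $\mathcal{E}_Q\ket{\Psi}=\ket{\Psi}$, whence $\bkmv{\Psi}{\mathcal{E}_P}{\Psi}=\bkmv{\Psi}{\mathcal{E}_Q}{\Psi}=1$. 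Hence one constituent carries the complete set of properties $P$ and the other the complete set $Q$, which is exactly the content of Definition \ref{gs}, so $\ket{\Psi}$ is separable. Here $P\perp Q$ is what makes the joint attribution consistent: by Remark \ref{rem1} it guarantees that $\mathcal{E}_P\mathcal{E}_Q$ is again a projection, equal to $P\otimes Q+Q\otimes P$, so both properties can be held simultaneously.

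The argument is short because the Proposition does the heavy lifting, and I expect the only genuine obstacle to be confirming that the \emph{second} orthogonal vector really supplies a complete set of properties to the remaining party, i.e. that $\bkmv{\Psi}{\mathcal{E}_Q}{\Psi}=1$ and not merely $\bkmv{\Psi}{\mathcal{E}_P}{\Psi}=1$. This is the step where the antisymmetric structure together with the orthogonality $\ket{\chi_1}\perp\ket{\chi_2}$ is essential: orthogonality both saturates the second expectation value and, through Pauli exclusion, rules out the degenerate case $P=Q$ (two Fermions in the same state), so that the two complete sets of properties are necessarily distinct and jointly attributable.
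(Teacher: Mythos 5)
Your proof is correct and follows essentially the same route as the paper: the forward direction reads the antisymmetric form \eqref{eq:statefer} (with $\ket{\Upsilon}\perp\ket{\phi_0}$ automatic) off the Proposition, and the converse identifies $P=\elr{\chi_1}{\chi_1}$, $Q=\elr{\chi_2}{\chi_2}$ and uses Remark \ref{rem1} to attribute both complete sets of properties simultaneously, exactly as in the paper's Fermionic case discussion. Your explicit computation showing $\mathcal{E}_Q\ket{\Psi}=\ket{\Psi}$ merely fills in a step the paper leaves implicit.
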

\item
Bosons: according to Remark \ref{rem1} and to the structure of the state in \eqref{eq:statebos}, three cases have to be distinguished:
\begin{enumerate}
\item
$c_{j0}=0 \ \forall j>0$: then, $\ket{\Psi} = \ket{\phi_{0}}\otimes\ket{\phi_{0}}$. Both Bosons posses the complete set of properties associated with the single-particle projection $P=\elr{\phi_{0}}{\phi_{0}}$ and $\ket{\Psi}$ is separable.
\item
$c_{00}=0$: then, $\bk{\Theta}{\Phi_{0}}=0$. As for Fermions, the orthogonal projections $P=\elr{\phi_{0}}{\phi_{0}}$ and $Q=\elr{\Theta}{\Theta}$
represent two complete sets of properties possessed by the two Bosons, so that $\ket{\Psi}$ is separable.
\item
$\bk{\Theta}{\phi_{0}} \neq 0$ and $\ket{\Theta}\neq\ket{\phi_{0}}$: while it is legitimate to attribute to one of the two Bosons the complete set of properties associated either with the projection $P=\kb{\phi_0}{\phi_0}$ or $Q=\kb{\Theta}{\Theta}$,
one cannot attribute simultaneously the two properties to the two Bosons.
Indeed, no projection $Q\perp P$ exists such that $\bkmv{\Psi}{\mathcal{E}_P\mathcal{E}_Q}{\Psi}= 1$, hence $\ket{\Psi}$ is entangled.
\end{enumerate}
\begin{coro}
A vector state $\ket{\Psi}$  of two identical bosons is separable if and only if it is obtained by symmetrizing the tensor product of two orthogonal single-particle states or if it is the tensor product of a same single-particle state.
\end{coro}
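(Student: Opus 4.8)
The plan is to read Definition~\ref{gs}, as sharpened by Remark~\ref{rem1}, in operational form: $\ket{\Psi}$ is separable precisely when there exist one-dimensional single-particle projections $P=\kb{u}{u}$ and $Q=\kb{v}{v}$, either equal or mutually orthogonal, for which the joint property is attained, $\bkmv{\Psi}{\mathcal{E}_P\,\mathcal{E}_Q}{\Psi}=1$. When $P\perp Q$, Remark~\ref{rem1} reduces $\mathcal{E}_P\mathcal{E}_Q$ to $P\otimes Q+Q\otimes P$, the orthogonal projection onto the symmetric subspace of $\mathrm{ran}\,P\otimes\mathrm{ran}\,Q$; when $P=Q$, the projection that selects \emph{both} bosons in $u$ is $P\otimes P$. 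With this reformulation the corollary splits into a short two-way argument anchored to the Proposition proved above.

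For the ``if'' direction I would simply exhibit the witnesses. If $\ket{\Psi}=\ket{u}\otimes\ket{u}$, take $P=Q=\kb{u}{u}$; then $\mathcal{E}_P\ket{\Psi}=\ket{\Psi}$ and $\bkmv{\Psi}{P\otimes P}{\Psi}=1$, so both bosons possess $P$. If $\ket{\Psi}=\tfrac{1}{\sqrt2}\big(\ket{u}\otimes\ket{v}+\ket{v}\otimes\ket{u}\big)$ with $\bk{u}{v}=0$, take $P=\kb{u}{u}$ and $Q=\kb{v}{v}$; a direct computation using $\mathcal{E}_P\mathcal{E}_Q=P\otimes Q+Q\otimes P$ gives $\bkmv{\Psi}{\mathcal{E}_P\mathcal{E}_Q}{\Psi}=1$, and likewise $\bkmv{\Psi}{\mathcal{E}_P}{\Psi}=\bkmv{\Psi}{\mathcal{E}_Q}{\Psi}=1$. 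In both cases Definition~\ref{gs} is met, so each stated form is separable.

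For the ``only if'' direction I would start from the witnessing pair $(P,Q)$ granted by separability. If $P=Q$, then $\bkmv{\Psi}{P\otimes P}{\Psi}=1$ forces $\ket{\Psi}=\ket{u}\otimes\ket{u}$, the ``same single-particle state'' form. If $P\perp Q$, then $\bkmv{\Psi}{P\otimes Q+Q\otimes P}{\Psi}=1$ forces the unit vector $\ket{\Psi}$ into the range of that projection, the two-dimensional symmetric subspace spanned by $\ket{u}\otimes\ket{v}$ and $\ket{v}\otimes\ket{u}$, whence $\ket{\Psi}=\tfrac{1}{\sqrt2}(\ket{u}\otimes\ket{v}+\ket{v}\otimes\ket{u})$ with $u\perp v$. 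To make contact with the Proposition, I would also fix one witness, say $P=\kb{\phi_0}{\phi_0}$, recover the general symmetrized shape \eqref{eq:statebos}--\eqref{eq:statebos1}, and identify the three listed possibilities with $\ket{\Theta}\propto\ket{\phi_0}$, with $\ket{\Theta}\perp\ket{\phi_0}$, and with the intermediate-overlap case.

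The one genuinely nontrivial point, and the main obstacle, is case~(3), where $0<\modu{\bk{\Theta}{\phi_0}}<1$: there I must certify that \emph{no} orthogonal $Q$ can accompany $P$, so that $\ket{\Psi}$ is truly entangled and not merely expressed in an inconvenient basis. I would settle this via the symmetric (Takagi) decomposition of the two-boson amplitude. Restricting to the plane spanned by $\ket{\phi_0}$ and the component $\ket{\Theta_\perp}$ of $\ket{\Theta}$ orthogonal to $\ket{\phi_0}$, the amplitude is the complex symmetric matrix $M=\left(\begin{smallmatrix} 2\alpha & \beta \\ \beta & 0 \end{smallmatrix}\right)$ with $\alpha=\bk{\phi_0}{\Theta}\neq0$ and $\beta\neq0$. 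A symmetrization of two orthogonal single-particle states is exactly a state whose Takagi singular values have equal magnitude, equivalently $MM^\dagger\propto\mathbb{I}$; but $MM^\dagger$ has off-diagonal entry $2\alpha\bar\beta\neq0$, so the two singular values differ, while $\det M=-\beta^2\neq0$ shows the state has rank two and is not of product form either. Hence $\ket{\Psi}$ is neither $\ket{u}\otimes\ket{u}$ nor a symmetrization of orthogonal rays, i.e.\ it is entangled. Equivalently, inside the property framework, a hypothetical witness $Q\perp P$ would force $\ket{\Psi}$ into a symmetric subspace built from orthogonal single-particle states, contradicting the nonzero $\ket{\phi_0}$-content of $\ket{\Theta}$. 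This closes case~(3) and therefore the corollary.
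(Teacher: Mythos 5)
Your proof is correct, and its skeleton---a witnessing pair of one-dimensional projections $P$, $Q$ with either $P=Q$ or $P\perp Q$, the reduction via the Proposition to the form \eqref{eq:statebos}--\eqref{eq:statebos1}, and the same three-case split---coincides with the paper's. Where you genuinely depart is in the one place the paper is silent: case 3. The paper merely \emph{asserts} that no projection $Q\perp P$ satisfies $\bkmv{\Psi}{\mathcal{E}_P\mathcal{E}_Q}{\Psi}=1$ (the closest it comes to a verification is the later second-quantized computation with $\cop_\phi\aop_\phi$ in Section \ref{sec3.1}), whereas you certify it with the Takagi decomposition of the symmetric amplitude: symmetrizations of two orthogonal rays are exactly the rank-two symmetric amplitudes with equal singular values (equivalently $MM^\dagger$ proportional to the identity on the support), and for your matrix $M=\left(\begin{smallmatrix} 2\alpha & \beta \\ \beta & 0 \end{smallmatrix}\right)$ the off-diagonal entry $2\alpha\bar{\beta}\neq 0$ of $MM^\dagger$ rules out equal singular values while $\det M=-\beta^{2}\neq 0$ rules out the product form; both computations check out (for a $2\times 2$ Hermitian matrix, equal eigenvalues force a scalar matrix, and restricting to the plane spanned by $\ket{\phi_0}$ and $\ket{\Theta_\perp}$ is legitimate since the range of $M$ lies in that plane). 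This buys something real: your invariant is basis-independent and excludes \emph{every} candidate witnessing pair $(P',Q')$ at once, whereas the paper's assertion, read literally, only excludes pairs containing the fixed $P=\kb{\phi_0}{\phi_0}$, leaving implicit the appeal to the Proposition needed to rule out other witnesses; in effect you have imported the bosonic Schmidt/Slater-rank criterion of Schliemann et al.\ and Eckert et al.\ to close that loop. What the paper's terser route buys is that it never leaves the property-attribution formalism it is about to translate into second quantization. One small point worth flagging: your opening reformulation of Definition \ref{gs} (simultaneous attribution encoded by $\bkmv{\Psi}{\mathcal{E}_P\mathcal{E}_Q}{\Psi}=1$ for $P\perp Q$, and by $\bkmv{\Psi}{P\otimes P}{\Psi}=1$ for $P=Q$) is an interpretation rather than the letter of the definition, though it is exactly the reading the paper endorses in Remark \ref{rem1} and in its case analysis, so no gap results.
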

\end{itemize}

\begin{remark}
\label{rem2}
Notice that, in the case of two Fermions, the attribution of one complete set of properties implies the simultaneous attribution of another orthogonal complete set of properties. Therefore, the attribution of one complete set of properties to a two fermion vector state $\ket{\Psi}$ is possible if and only if the state is separable. Instead, in the case of two Bosons, Case $3$ shows that there are vector states for which  two non-orthogonal complete sets of properties can be separately, but not simultaneously, attributed: these states are declared entangled.
\end{remark}

\subsection{Second Quantization}
\label{sec2.2}

Particle identity is at the core  of quantum statistical mechanics and the standard approach to such systems is in terms of creation and annihilation operators whose canonical commutation or anti-commutation relations embody the Bosonic or Fermionic character of the particles.
As for the representation Hilbert space, this is generated by acting with (powers of) the creation operators on a reference state $\ket{0}$ (called vacuum): typically, given the single particle Hilbert space $\mathbb{H}$, to any vector $\ket{\psi}\in\mathbb{H}$ one associates a creation operator which creates it by acting on the vacuum, $\hat{a}^\dag(\psi)\ket{0}=\ket{\psi}$, and an annihilation operator $\hat{a}(\psi)$ that destroys such a state, $a(\psi)\ket{\psi}=\ket{0}$.
Given two states $\ket{\psi_{1,2}}\in\mathbb{H}$, by expanding them with respect to an orthonormal basis $\{\ket{\phi_j}\}_j\in\mathbb{H}$ in the single-particle Hilbert space whose vectors are generated and destroyed by $\hat{a}^\dag_j\,,\,\hat{a}_j$,
the CCR and CAR given by \eqref{CCR} and \eqref{CAR} generalize to
\begin{equation}
\label{CCAR}
[\hat{a}(\psi_1)\,,\,\hat{a}^\dag(\psi_2)]=\bk{\psi_1}{\psi_2}=\{\hat{a}(\psi_1)\,,\,\hat{a}^\dag(\psi_2)\}\ .
\end{equation}
Orthonormal basis vectors are obtained by acting repeatedly with creation operators on the vacuum
\begin{equation}
\label{basis}
\ket{n_1,n_2,\ldots,n_k}=\frac{(\hat{a}_1^\dag)^{n_1}(\hat{a}_2^\dag)^{n_2}\cdots (\hat{a}_k^\dag)^{n_k} }{\sqrt{\prod_{j=1}^k n_j!}}\ket{0}\ .
\end{equation}
Each pair $\hat{a}_i\,,\,\hat{a}^\dag_i$ is associated to a possible mode of the system of identical particles so that these states  contain $n_i$ particles in the first $i$-th modes. Indeed, from the \eqref{CCR} and \eqref{CAR} follows that these states are eigenstates of the number operator
\begin{equation}
\label{Number}
\hat{N}=\sum_j\hat{a}_j^\dag \hat{a}_j\ ,\quad \hat{N}\ket{n_1,n_2,\ldots,n_k}=\Big(\sum_{j=1}^k n_j\Big)\,\ket{n_1,n_2,\ldots,n_k}\ .
\end{equation}
The vectors (\ref{basis}) form an orthonormal basis for the Hilbert space of $N=\sum_{j=1}^k n_j$ particles, with $n_j=0,1$ for Fermions. By varying $N$ one constructs the Fock Hilbert space $\mathcal{H}$ as the orthogonal sum of the sectors with finite numbers of Bosons or Fermions.

It is thus clear that, in a second quantized formalism, the building blocks are the annihilation and creation pairs $\hat{a}_i\,,\,\hat{a}_i^\dag$, together with the CCR and CAR, and their polynomials: we shall denote by $\mathcal{A}$ the Bose or Fermi algebra arising from such polynomials by closing them in norm in the Fermi case or via the generalized Weyl operators in the Bose case (see footnote \eqref{footnote1}).

\begin{remark}
\label{rem3}
The symmetric and anti-symmetric character of Bosonic and Fermionic vector states typical of the first quantization formalism appears when multi-particle states are represented with respect to the chosen basis. For instance, setting $\psi(i)=\bk{i}{\psi}$,
$\ket{\psi_1\psi_2}=\hat{a}^\dag(\psi_1)\hat{a}^\dag(\psi_2)\ket{0}$ and
$\ket{ij}=\hat{a}^\dag_i\hat{a}^\dag_j\ket{0}$, from \eqref{CCAR} one gets
$$
\bk{ij}{\psi_1\psi_2}=\bkmv{0}{\hat{a}_i\hat{a}_ja^\dag(\psi_1)\hat{a}^\dag(\psi_2)}{0}=\psi_1(i)\psi_2(j)\pm\psi_1(j)\psi_2(i)\ .
$$
Furthermore, when $i\neq j$, a basis vector state as $\ket{ij}$ corresponds in first quantization to
$$
\frac{\ket {i}\otimes\ket{j}\pm\ket{j}\otimes\ket{i}}{\sqrt{2}}\ .
$$
\end{remark}

The lack of a tensor product structure either in the algebra $\mathcal{A}$ or in the Hilbert space spanned by vectors as in \eqref{basis} requires the notions of locality of observables and separability of states to be defined within a purely algebraic context. The notion of local observables or of a local subalgebra of observables is typical of quantum statistical mechanics where one considers single particle states $\ket{\psi_V}$ supported within finite volumes $V\subset\mathbb{R}^3$ and the algebras $\mathcal{A}_V\subset\mathcal{A}$ generated by all annihilation and creation operators $\hat{a}(\psi_V)\,,\,\hat{a}^\dag(\psi_V)$.

Let us first consider the case of a Bosonic system.
If two disjoint volumes are considered, $V_1\cap V_2=\emptyset$, then, from the CCR one gets
$$
\Big[\hat{a}(\psi_{V_1})\,,\,\hat{a}^\dag(\psi_{V_2})\Big]=\langle\psi_{V_1}\vert\psi_{V_2}\rangle=\int_{\RI^3}{\rm d}^3r\,
\psi^*_{V_1}(r)\, \psi_{V_2}(r)=0\ .
$$
This commutativity extends to monomials and polynomials so that the sub-algebras $\mathcal{A}_{V_{1,2}}$ commute, namely
\begin{equation}
\label{commsub}
\Big[A_1\,,\,A_2\Big]=0\qquad \forall A_1\in\mathcal{A}_{V_1}\ ,\ A_2\in\mathcal{A}_{V_2}\ .
\end{equation}
This commutativity corresponds to algebraic independence as it implies that measuring an observable in $\mathcal{A}_{V_1}$ cannot influence the simultaneous measurement of an observable in $\mathcal{A}_{V_2}$.
As already mentioned in the introduction, in \cite{benffm2010} algebraic independence is used to consistently generalize the standard notion of locality connected with algebraic tensor product structures as $M_2(\mathbb{C})\otimes M_2(\mathbb{C})$ for two
distinguishable  qubits.

\begin{remark}
\label{rem4}
The case of Fermions requires particular care as in the same conditions as before, we know that
$$
\Big\{\hat{a}(\psi_{V_1})\,,\,\hat{a}^\dag(\psi_{V_2})\Big\}=\langle\psi_{V_1}\vert\psi_{V_2}\rangle=\int_{\RI^3}{\rm d}^3r\,
\psi^*_{V_1}(r)\, \psi_{V_2}(r)=0\ ,
$$
but nothing can be said about the fate of commutators $\Big[\hat{a}(\psi_{V_1})\,,\,\hat{a}^\dag(\psi_{V_2})\Big]$.
In order to be sure that two Fermionic operators  be commuting, at least one of them must be within the algebra generated by polynomials of even order.
Indeed, one can then use \eqref{CAR-CCR} and the anti-commutativity of $\mathcal{A}_{V_{1,2}}$ to see that they commute.
Therefore,  one can meaningfully ask about the commutativity of the  pair of sub-algebras $(\mathcal{A}^{ev}_{V_1}\,,\,\mathcal{A}^{ev}_{V_2})$,
$(\mathcal{A}^{odd}_{V_1}\,,\,\mathcal{A}^{ev}_{V_2})$ and $(\mathcal{A}^{ev}_{V_1}\,,\,\mathcal{A}^{odd}_{V_2})$, where
$\mathcal{A}_V^{ev,odd}$ are the sub-algebras generated by even, respectively odd polynomials in the Fermionic annihilation and creation operators of functions localized within the volume $V$.
\end{remark}

The pairs $(\mathcal{A}_1,\mathcal{A}_2)$ of sub-algebras we consider in what follows will be assumed to be generated by two sets of creation and annihilation operators $\{(\hat{a}_i,\hat{a}^\dag_i): i\in I_{1,2}\}$.
We can now formulate the notion of locality in purely algebraic terms.

\begin{mydef}\hfill
\label{defloc}

Any pair $\left( \mathcal{A}_{1},\mathcal{A}_{2} \right)$ of sub-algebras $\mathcal{A}_{1,2}\subseteq\mathcal{A}$ such that $\mathcal{A}_1 \cup \mathcal{A}_1=\mathcal{A}$ will be called a bipartition if
\begin{enumerate}
\item
Bosonic case: $\mathcal{A}_1$ and $\mathcal{A}_2$ commute;
\item
Fermionic case: $\mathcal{A}^{ev}_1$ and $\mathcal{A}^{ev}_2$ commute, as well as $\mathcal{A}^{ev}_1$ and $\mathcal{A}^{odd}_2$, $\mathcal{A}^{odd}_1$ and $\mathcal{A}^{ev}_2$.
\end{enumerate}
Furthermore,  $A\in\mathcal{A}$ will be called local with respect to
$\left( \mathcal{A}_{1},\mathcal{A}_{2} \right)$ if
\begin{enumerate}
\item
Bosonic case: $A=A_{1}A_{2}$ with $A_{1} \in \mathcal{A}_{1}$ and $A_{2} \in \mathcal{A}_{2}$;
\item
Fermionic case:  $A=A_{1}A_{2}$ with either $A_{1} \in \mathcal{A}^{ev}_{1}$ and $A_{2} \in \mathcal{A}^{ev}_{2}$, or
$A_{1} \in \mathcal{A}^{ev}_{1}$ and $A_{2} \in \mathcal{A}^{odd}_{2}$, or $A_{1} \in \mathcal{A}^{odd}_{1}$ and $A_{2} \in \mathcal{A}^{ev}_{2}$.
\end{enumerate}
\end{mydef}

\begin{remark}
\label{rem5}
As in the case of distinguishable particles where one may consider multi-partite entanglement with respect to multiple tensor product algebras, for instance
$M_2(\mathbb{C})^{\otimes n}$ in the case of $n$ qubits, also in the second quantized approach one could consider a multi-partition of $\mathcal{A}$ into a set of $n$ mutually commuting sub-algebras that generate it.
However, in the following we shall stick to the case of two commuting sub-algebras $\mathcal{A}_{1,2}\subset\mathcal{A}$, exactly as, for standard qubits, it occurs with  the bipartition $\Big(M_2(\mathbb{C})\otimes
\mathbb{I},\mathbb{I}\otimes M_2(\mathbb{C})\Big)$ which generates $M_4(\mathbb{C})$.
\end{remark}

One of the advantages of the algebraic approach is that the notion of state can be extended beyond Hilbert state vectors and density matrices to that
of positive and normalized functionals $\omega$ over the algebra $\mathcal{A}$, namely linear maps from $\mathcal{A}$ into $\mathbb{C}$ such that
\begin{equation}
\label{plf}
\omega(A^\dag\,A)\geq 0\qquad\forall\, A\in\mathcal{A}\quad\hbox{and}\quad \omega(\mathbb{I})=1\ ,
\end{equation}
where $\omega(A)$ is called the expectation of $A$ with respect to $\omega$ \cite{BraRob}.

In the algebraic approach, mixed states on the algebra $\mathcal{A}$ are those $\omega$ that can be written as convex combinations of other functionals on $\mathcal{A}$, $\omega=\sum_j\lambda_j\,\omega_j$, $\lambda_j\geq 0$ and $\sum_j\lambda_j=1$. If a state $\omega$ cannot be convexly decomposed, it is called pure.
On finite dimensional Hilbert spaces, these positive, normalized functionals $\omega$ boil down to either pure state projections or to density matrices; that is, if $\mathcal{A}=M_n(\mathbb{C})$ then, for all $A\in M_n(\mathbb{C})$,
\begin{equation}
\label{plffd}
\omega(A)=\langle\psi\vert A\vert\psi\rangle={\rm Tr}\Big(\kb{\psi}{\psi}\,A\Big)\ ,\quad\hbox{or}\quad
\omega(A)={\rm Tr}\Big(\rho\,A\Big)\ ,
\end{equation}
where $\rho\in M_n(\mathbb{C})$ is a density matrix.

In the case of two distinguishable qubits, a density matrix
$\rho\in M_4(\mathbb{C})$ is separable when $\rho$ is a convex sum of tensor products of single particle density matrices,
$\rho=\sum_j\lambda_j\,\rho^{(1)}_j\otimes\rho^{(2)}_j$, $\lambda_j\geq 0$ and $\sum_j\lambda_j=1$ so that
$$
{\rm Tr}\Big(\rho\,A_1\otimes A_2\Big)=\sum_j\lambda_j\,{\rm Tr}\Big(\rho^{(1)}_j\,A_1\Big)\,{\rm Tr}\Big(\rho^{(2)}_j\,A_2\Big)
$$
on all local operators $A_1\otimes A_2$.

Together with the previous definition of bipartitions, this leads to the following generalization of the notion of separable states.

\begin{mydef}
\label{def:ben}
A state $\omega$ on the algebra $\mathcal{A}$  is separable with respect to the bipartition $\left( \mathcal{A}_{1},\mathcal{A}_{2} \right)$, or $\left( \mathcal{A}_{1},\mathcal{A}_{2} \right)$-separable,
if, for all $\left( \mathcal{A}_{1},\mathcal{A}_{2} \right)$-local operators $A=A_1A_2$,
\begin{equation}
\label{sep2q}
\omega(A_{1}A_{2}) = \sum_{k} \lambda_{k}\ \omega_{k}^{(1)}(A_{1})\, \omega_{k}^{(2)}(A_{2}) \qquad\quad \lambda_{k} \geq 0, \qquad \sum_{k} \lambda_{k}=1 \ ,
\end{equation}
where $\omega$, $\omega_{k}^{(1)}$ and $\omega_{k}^{(2)}$ are other states on $\mathcal{A}$.
Otherwise, the state $\omega$ is said to be $\left( \mathcal{A}_{1},\mathcal{A}_{2} \right)$-entangled.
\end{mydef}

In the second quantized formalism, the possibility of uncountable many pairs of local sub-algebras and thus of  states that can be separable with respect to a bi-partition and entangled with respect to another one is evident.
In the case of distinguishable particles, this fact is masked by the natural distinction of particles, this in turn being related to the
a-priori  tensor product structure of the Hilbert space.
As an example, consider two qubits: states as $\ket{ij}=\ket{i}\otimes\ket{j}$, with $\ket{i}$, $i=1,2$, an orthonormal basis in $\mathbb{C}^2$, are separable with respect to the tensor product structure $\mathbb{C}^2\otimes\mathbb{C}^2$.
However, through the Bell states
$$
\ket{\psi_\pm}=\frac{\ket{00}\pm\ket{11}}{\sqrt{2}}\ ,\quad \ket{\phi_\pm}=\frac{\ket{01}\pm\ket{10}}{\sqrt{2}}\ ,
$$
that are maximally entangled with respect to the bipartition $\Big(M_2(\mathbb{C})\otimes \mathbb{I},\mathbb{I}\otimes M_2(\mathbb{C})\Big)$, and the $4\times 4$ matrices
$$
\kb{\psi_+}{\psi_+}\ ,\ \kb{\psi_-}{\psi_+}\ ,\ \kb{\psi_+}{\psi_-}\ ,\ \kb{\psi_-}{\psi_-}\ ,
$$
respectively
$$
\kb{\phi_+}{\phi_+}\ ,\ \kb{\phi_-}{\phi_+}\ ,\ \kb{\phi_+}{\phi_-}\ ,\ \kb{\phi_-}{\phi_-}\ ,
$$
one constructs two sub-algebras $\mathcal{A}_\psi$ and $\mathcal{A}_\phi$. These are isomorphic to $M_2(\mathbb{C})$, commute with each other because of the orthogonality of the Bell states and generate an algebra isomorphic to $M_4(\CI)$.
With respect to  the bipartition $(\mathcal{A}_{\psi},\mathcal{A}_\phi)$, the states as $\ket{ij}$ are entangled, while the Bell states are separable.

Differently from the first quantization approach where the attribution of properties work only for vector states, the algebraic approach based on the second quantization formalism  provides tools that are valid for pure and mixed states. For instance, for $N$ Bosons and two mode one can prove that the partial transposition criterion is an exhaustive entanglement witness exactly as for two qubits or one qubit and one qutrit \cite{benffm2010}.

In the following, we shall focus upon Bosons and Fermions having at disposal a number $M$ of modes, possibly infinite, associated with annihilation and creation operators $\hat{a}_i\,,\,\hat{a}^\dag_i$, $1\leq i\leq M$.
We shall construct a bipartition of the algebra $\mathcal{A}$ by fixing an integer $m$, $0 \leq m \leq M$, and by considering  the subsets
$$
\Big\{\cop_{i},\aop_{i}\ :\  i=1,2,\dots,m \Big\}\ ,\quad \Big\{ \cop_j,\aop_j\ :\ j=m+1,m+2,\dots,M \Big\}\ .
$$
We shall denote by $\mathcal{A}_1\subseteq\mathcal{A}$ the subalgebra generated by the first set, by $\mathcal{A}_2$ that generated by the second one.
Furthermore, in order to be able to compare the two approaches, we shall deal with pure states, only; we first review and then extend to Fermions a result that was previously proved to hold for Bosons \cite{benffm2012}.

\subsubsection{Separable Bosonic and Fermionic Pure States}

In the standard setting, normalized separable vector states $\ket{\psi}=\ket{\psi_1}\otimes\ket {\psi_2}$
are such that
$$
\langle\psi\vert A_1\otimes A_2\vert\psi\rangle=\langle\psi_1\vert A_1\vert\psi_1\rangle\,\langle\psi_2\vert A_2\vert \psi_2\rangle
=\langle\psi\vert A_1\otimes \mathbb{I}\vert\psi\rangle\,\langle\psi\vert\mathbb{I}\otimes A_2\vert\psi\rangle\ ,
$$
for all operators $A_{1,2}$ acting on their respective Hilbert spaces.
An analogous relation holds in the algebraic setting.

\begin{lemma}
\label{lemma1}
Bosonic vector states states $\omega(X)=\bkmv{\Psi}{X}{\Psi}$ on $\mathcal{A}$, where $\ket{\Psi}$ belongs to the Fock Hilbert space $\mathcal{H}$ generated by vectors as in \eqref{basis}, are separable with respect to a bipartition $(\mathcal{A}_1,\mathcal{A}_2)$ if and only if
\begin{equation}
\label{pureom}
\bkmv{\Psi}{A_1\,A_2}{\Psi} = \bkmv{\Psi}{A_1}{\Psi}\, \bkmv{\Psi}{A_2}{\Psi}\qquad\forall\, A_1\in\mathcal{A}_1\,,\,A_2\in\mathcal{A}_2\ .
\end{equation}
The same factorization characterizes Fermionic vector states that remain pure on the even sub-algebra generated by $\mathcal{A}^{ev}_{1,2}$.
\end{lemma}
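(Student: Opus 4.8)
The plan is to prove the two implications separately, the nontrivial content being that for a \emph{pure} state the general convex decomposition of Definition~\ref{def:ben} collapses to a single product. The easy direction is immediate: if \eqref{pureom} holds, choosing the one-term decomposition $\lambda_1=1$, $\omega^{(1)}_1=\omega^{(2)}_1=\omega$ in \eqref{sep2q} reproduces $\omega(A_1A_2)=\omega(A_1)\,\omega(A_2)$, so $\omega$ is $(\mathcal{A}_1,\mathcal{A}_2)$-separable. Thus only the converse, separability $\Rightarrow$ factorization, requires work.

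For the converse in the Bosonic case, I would first exploit the fact that the mode bipartition endows the Fock space with a genuine tensor product structure: writing the single-particle space as $\mathbb{H}=\mathbb{H}_1\oplus\mathbb{H}_2$ according to the two sets of modes, one has $\mathcal{H}\cong\mathcal{H}_1\otimes\mathcal{H}_2$ and, algebraically, $\mathcal{A}\cong\mathcal{A}_1\otimes\mathcal{A}_2$ with $\mathcal{A}_1,\mathcal{A}_2$ the commuting CCR subalgebras. To make this rigorous despite the unboundedness of the $\hat a_i,\hat a^\dag_i$ I would work with the Weyl generators $W(f)$: since $f=f_1+f_2$ with $f_1\perp f_2$ gives $W(f)=W(f_1)\,W(f_2)$ with no extra phase, every Weyl generator is already a local operator $A_1A_2$, and their linear span is dense in $\mathcal{A}$. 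The Fock representation being irreducible, the vector state $\omega=\bkmv{\Psi}{\cdot}{\Psi}$ is a \emph{pure} (extremal) state on $\mathcal{A}$.

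Next I would turn the separability data into an honest convex decomposition of $\omega$ as a state on all of $\mathcal{A}$. For each $k$ the restrictions $\omega^{(1)}_k|_{\mathcal{A}_1}$ and $\omega^{(2)}_k|_{\mathcal{A}_2}$ define, through $\mathcal{A}\cong\mathcal{A}_1\otimes\mathcal{A}_2$, a product state $\mu_k:=\omega^{(1)}_k|_{\mathcal{A}_1}\otimes\omega^{(2)}_k|_{\mathcal{A}_2}$ on $\mathcal{A}$, whose positivity and normalization are automatic. By \eqref{sep2q} the functionals $\omega$ and $\sum_k\lambda_k\mu_k$ agree on every local operator $A_1A_2$, hence on the dense span of the Weyl generators, hence by continuity on all of $\mathcal{A}$; thus $\omega=\sum_k\lambda_k\mu_k$. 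Extremality of $\omega$ now forces $\mu_k=\omega$ for every $k$ with $\lambda_k>0$. Fixing one such $k$, $\omega=\omega^{(1)}_k\otimes\omega^{(2)}_k$ is itself a product state; putting $A_2=\mathbb{I}$ and then $A_1=\mathbb{I}$ gives $\omega^{(1)}_k|_{\mathcal{A}_1}=\omega|_{\mathcal{A}_1}$ and $\omega^{(2)}_k|_{\mathcal{A}_2}=\omega|_{\mathcal{A}_2}$, so that $\omega(A_1A_2)=\omega(A_1)\,\omega(A_2)$, which is exactly \eqref{pureom}.

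For Fermions the same scheme applies, but only after isolating the even part, and this is where the principal obstacle lies. The two mode sectors now \emph{anti}commute rather than commute, so $\mathcal{A}$ does not factorize as a naive tensor product and the product-state construction $\mu_k$ is unavailable on the full algebra; this is precisely why Definition~\ref{defloc} and Remark~\ref{rem4} phrase locality and separability through the even subalgebras. Restricting to $\mathcal{A}^{ev}$ generated by the commuting $\mathcal{A}^{ev}_1,\mathcal{A}^{ev}_2$, the offending sign factors cancel and one recovers a bona fide tensor product $\mathcal{A}^{ev}\cong\mathcal{A}^{ev}_1\otimes\mathcal{A}^{ev}_2$ on which the Bosonic argument can be repeated verbatim. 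The delicate point is that, unlike on the full Fock algebra, a Fermionic vector state need not remain extremal when restricted to $\mathcal{A}^{ev}$ (parity superselection can make that representation reducible); the hypothesis that $\omega$ stays pure on $\mathcal{A}^{ev}$ is exactly what restores extremality and lets the decomposition $\omega=\sum_k\lambda_k\mu_k$ collapse, yielding the factorization \eqref{pureom} for $A_{1,2}\in\mathcal{A}^{ev}_{1,2}$.
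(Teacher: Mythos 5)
Your proposal is correct and takes essentially the same route as the paper's proof: both directions match, and in the converse you, like the paper, promote the separable decomposition \eqref{sep2q} from local operators to a convex decomposition of $\omega$ on all of $\mathcal{A}$ (the paper by expanding arbitrary elements as sums $\sum_{a,b}C_{ab}A^{(1)}_aA^{(2)}_b$ of local operators, you via the exactly factorizing Weyl generators plus density and continuity), then use purity/extremality of the vector state to collapse the sum to a single product term and normalize with $A_{1,2}=\mathbb{I}$, handling Fermions by running the identical argument on the even subalgebra where purity is hypothesized. Your write-up simply supplies the $C^*$-algebraic details (Weyl factorization without phase, irreducibility of the Fock representation, positivity of the product functionals $\mu_k$, the parity-superselection reason the purity hypothesis is needed) that the paper leaves implicit.
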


\begin{proof}
According to Definition \ref{def:ben},  states satisfying \eqref{pureom} are automatically $\left( \mathcal{A}_{1},\mathcal{A}_{2} \right)$-separable as they acts as in  \eqref{sep2q} with only one convex contribution in factorized form.

In the Bosonic case,  $\left( \mathcal{A}_{1},\mathcal{A}_{2} \right)$-local operators generate the whole algebra $\mathcal{A}$; then, any $A\in\mathcal{A}$ can be written as  $A=\sum_{a,b} C_{ab}\, A^{(1)}_a\,A^{(2)}_b$ with $A^{(1)}_a\in\mathcal{A}_1$ and $A^{(2)}_b\in\mathcal{A}_2$.
Therefore, if a vector state satisfies \eqref{sep2q} on all $\left(\mathcal{A}_{1},\mathcal{A}_{2}\right)$-local operators, then
$$
\bkmv{\Psi}{A}{\Psi}=\sum_k\lambda_k\sum_{a,b}C_{ab}\,\omega^{(1)}_k(A_a^{(1)})\,\omega^{(2)}_k(A_b^{(2)})=\sum_k\lambda_k\tilde{\omega}_k(A)\ ,
$$
in terms of other states $\tilde{\omega}_k$  defined on the whole of $\mathcal{A}$ 
by means of $\tilde{\omega}_k(A)=$\break $\sum_{a,b} C_{ab}\, \omega_k^{(1)}(A^{(1)}_a)\omega_k^{(2)}(A^{(2)}_b)$.
Therefore, $\ket{\Psi}$ can correspond to a pure separable state $\omega$ only if $\lambda_k\neq 0$ for just a single $k$ whence \eqref{pureom} is satisfied. Indeed,
$\bkmv{\Psi}{A_1A_2}{\Psi}=\omega^{(1)}_k(A_1)\,\omega^{(2)}_k(A_2)$ yields $\omega^{(1,2)}_k(A_{1,2})=\bkmv{\Psi}{A_{1,2}}{\Psi}$ for $A_1=\mathbb{I}$ and $A_2=\mathbb{I}$.

Because of the assumed purity of $\ket{\Psi}$ on the sub-algebra generated by the even components of the bipartition
$(\mathcal{A}_{1},\mathcal{A}_2)$, the same argument  holds for Fermions.
\qed
\end{proof}

\begin{remark}
\label{remfond}
There is a fundamental difference between Bosonic and Fermionic systems: in the latter case, as seen in Remark \ref{rem4}, there are two odds components in the bipartition $(\mathcal{A}_1,\mathcal{A}_2)$ and operators in these components anti-commute.
In Definitions \ref{defloc} and \ref{def:ben} 
the notions of locality and of state separability have been
formulated in terms of commuting observables only.
Therefore, it may happen that vector states as $\ket{\Psi}$, which are pure on the whole algebra $\mathcal{A}$ generated by $\mathcal{A}_{1,2}$,  turn out to be no longer pure when restricted to the Fermionic sub-algebra generated by $\mathcal{A}_{1,2}^{ev}$.
This possibility will not be considered in the following since it is not needed for the subsequent discussion 
\end{remark}

The next proposition gives the explicit expression of the pure states fulfilling the requests of the previous lemma.

\begin{proposition}
\label{thm:main}
A normalized Bosonic vector state $\ket{\Psi}$ in the  Fock Hilbert space $\mathcal{H}$ spanned by vectors as in \eqref{basis} is $\left(\mathcal{A}_{1},\mathcal{A}_{2} \right)$-separable if and only if it is generated by a
$\left(\mathcal{A}_{1},\mathcal{A}_{2} \right)$-local operator,
\begin{equation}
\label{eq:thmben}
\ket{\Psi} = \mathcal{P}(\hat{a}^\dag_{1},\dots,\hat{a}^\dag_{m}) \cdot \mathcal{Q}(\hat{a}^\dag_{m+1},\dots,\hat{a}^\dag_{M}) \ket{0}\ ,
\end{equation}
where $\mathcal{P}$, $\mathcal{Q}$ are polynomials in the creation operators relative to the first $m$ modes and the last $M-m$ modes, respectively. Otherwise, the state is entangled.

The same is true of  normalized Fermionic vector states $\ket{\Psi}$ that remain  pure when restricted to the even sub-algebra generated by $\mathcal{A}_{1,2}^{ev}$.
\end{proposition}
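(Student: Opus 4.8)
The plan is to reduce the problem, in the Bosonic case, to the familiar distinguishable-particle situation by exploiting the fact that the Fock space over the full set of modes factorizes as a genuine tensor product $\mathcal{H}\cong\mathcal{F}_1\otimes\mathcal{F}_2$, where $\mathcal{F}_1$ is the Fock space built from the first $m$ creation operators and $\mathcal{F}_2$ the one built from the remaining $M-m$. Under this identification the vacuum factorizes, $\ket{0}=\ket{0}_1\otimes\ket{0}_2$, the creation operators split as $\cop_i=\cop_i\otimes\mathbb{I}$ for $i\le m$ and $\cop_j=\mathbb{I}\otimes\cop_j$ for $j>m$, and $\mathcal{A}_1$, $\mathcal{A}_2$ become (dense subalgebras of) the full operator algebras on the two factors. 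With this in hand, the equivalence to be proved becomes the statement that a pure state is $(\mathcal{A}_1,\mathcal{A}_2)$-separable if and only if it is a single tensor product.

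For the ``if'' direction, suppose $\ket{\Psi}=\mathcal{P}(\cop_1,\dots,\cop_m)\,\mathcal{Q}(\cop_{m+1},\dots,\cop_M)\ket{0}$. Since $\mathcal{P}$ acts only on $\mathcal{F}_1$ and $\mathcal{Q}$ only on $\mathcal{F}_2$, this is $\ket{\Psi}=\ket{\psi_1}\otimes\ket{\psi_2}$ with $\ket{\psi_1}=\mathcal{P}\ket{0}_1$ and $\ket{\psi_2}=\mathcal{Q}\ket{0}_2$, each normalized (possible since $\ket{\Psi}$ is normalized). For $A_1\in\mathcal{A}_1$ and $A_2\in\mathcal{A}_2$ one then has $\bkmv{\Psi}{A_1A_2}{\Psi}=\bra{\psi_1}A_1\ket{\psi_1}\,\bra{\psi_2}A_2\ket{\psi_2}$, which is precisely the factorization \eqref{pureom}; Lemma~\ref{lemma1} then gives $(\mathcal{A}_1,\mathcal{A}_2)$-separability.

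For the substantive ``only if'' direction, assume $\ket{\Psi}$ is separable, so by Lemma~\ref{lemma1} the factorization \eqref{pureom} holds for all $A_1\in\mathcal{A}_1$, $A_2\in\mathcal{A}_2$. Writing $A_1A_2=A_1\otimes A_2$ in the tensor picture, \eqref{pureom} reads $\Tr{\rho_\Psi\,(A_1\otimes A_2)}=\Tr{\rho_1\,A_1}\,\Tr{\rho_2\,A_2}$, where $\rho_1,\rho_2$ are the reduced density matrices of $\rho_\Psi=\kb{\Psi}{\Psi}$. Since $A_1,A_2$ range over algebras that generate all of $B(\mathcal{F}_1)$ and $B(\mathcal{F}_2)$, this forces $\rho_\Psi=\rho_1\otimes\rho_2$. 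But $\rho_\Psi$ is a rank-one projector and the rank of a tensor product is the product of the ranks, so $\rho_1$ and $\rho_2$ must both be pure; hence the Schmidt decomposition of $\ket{\Psi}$ has a single term, $\ket{\Psi}=\ket{\psi_1}\otimes\ket{\psi_2}$. Since every vector of $\mathcal{F}_1$ is obtained by acting on $\ket{0}_1$ with a polynomial (a series, when the particle number is unbounded) in $\cop_1,\dots,\cop_m$, and likewise for $\mathcal{F}_2$, we recover the claimed form \eqref{eq:thmben}.

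For Fermions the same skeleton applies, but the main obstacle is that the naive tensor factorization breaks down: odd elements of $\mathcal{A}_1$ and $\mathcal{A}_2$ anticommute across the cut, so $\mathcal{H}$ carries only a graded tensor structure and the identification $A_1A_2=A_1\otimes A_2$ is no longer literal. The remedy, and the reason for the purity hypothesis in the statement, is to run the entire argument on the commuting even subalgebras $\mathcal{A}_{1,2}^{ev}$: by assumption $\ket{\Psi}$ remains pure there, so the reduced-state and Schmidt reasoning goes through for the even part, and the $\mathbb{Z}_2$-parity bookkeeping then reconstructs $\ket{\Psi}$ as $\mathcal{P}\,\mathcal{Q}\ket{0}$ with $\mathcal{P},\mathcal{Q}$ of definite, compatible parities. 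Controlling this grading, i.e.\ ensuring that the factorization obtained on the even algebra lifts to an honest product vector of the stated polynomial form, is the delicate point of the proof.
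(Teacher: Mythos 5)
Your Bosonic argument is correct but follows a genuinely different route from the paper's. The paper never invokes the factorization $\mathcal{F}(\mathbb{H}_1\oplus\mathbb{H}_2)\cong\mathcal{F}_1\otimes\mathcal{F}_2$ of the Fock space: instead it expands $\ket{\Psi}$ in the occupation-number basis split across the two mode sets, builds explicit elements of $\mathcal{A}_{1,2}$, Eqs.~\eqref{aid2}--\eqref{aid4} (contour integrals of resolvents of the number operators, which act as matrix units $\kb{\{p'\}}{\{p\}}$ between occupation states), and converts the factorization \eqref{pureom} of Lemma~\ref{lemma1} into the coefficient constraint \eqref{constcon}, which it then solves separately for moduli and phases to obtain $C_{\{p\},\{\beta\}}=C^{(1)}_{\{p\}}C^{(2)}_{\{\beta\}}$, i.e.\ the local form \eqref{eq:thmben}. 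Your reduction to the distinguishable-particle Schmidt argument is shorter and conceptually cleaner, at the price of one analytic step you should make explicit: the algebras $\mathcal{A}_{1,2}$ are generated by \emph{unbounded} polynomials, so to conclude $\rho_\Psi=\rho_1\otimes\rho_2$ you must know \eqref{pureom} on a family of bounded operators weakly dense in $B(\mathcal{F}_{1,2})$ --- e.g.\ pass to the Weyl operators (as the paper's footnote anticipates) and use normality of the vector state. Note also that, exactly as in the paper's own proof, ``polynomial'' must tacitly be read as ``power series'' when the product vector has unbounded particle content (cf.\ Remark~\ref{rem6}); this imprecision is shared, not introduced by you.

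For Fermions, however, your proposal has a genuine gap, and it sits exactly where you flag the ``delicate point'': the parity bookkeeping is never done, and it is not bookkeeping. Because odd elements across the cut anticommute, $\mathcal{H}$ carries only a graded tensor structure, and the subalgebra generated by $\mathcal{A}^{ev}_1$ and $\mathcal{A}^{ev}_2$ is reducible (it commutes with the parity operator); consequently a vector state pure on $\mathcal{A}$ need not restrict to a pure state there (this is precisely Remark~\ref{remfond}), the restriction does not determine $\ket{\Psi}$ uniquely, and promoting a factorization over the even algebra to an honest product vector of the form \eqref{eq:thmben} with $\mathcal{P},\mathcal{Q}$ of definite parity \emph{is} the content of the Fermionic half of the proposition. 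Saying that ``$\mathbb{Z}_2$-parity bookkeeping reconstructs'' the product form restates the claim rather than proving it; to complete your route you would need, e.g., a Klein/Jordan--Wigner transformation turning $\mathcal{A}_2$ into an algebra genuinely commuting with $\mathcal{A}_1$, plus a verification that it preserves the class of states \eqref{eq:thmben}. The paper sidesteps all of this: its test operators \eqref{aid2}--\eqref{aid4}, restricted to parity-allowed choices (in particular $\{p\}=\{p'\}$, $\{\beta\}=\{\beta'\}$, which are automatically even), satisfy the evaluations \eqref{aid5}--\eqref{aid7} verbatim under the CAR, so the coefficient relations and their solution carry over without ever introducing a tensor product --- which is exactly what lets its method transfer from Bosons to Fermions at essentially no extra cost, where your method stalls.
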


\begin{proof}
Consider first the boson case. Vector states as in the statement of the proposition have the general form
\begin{equation}
\label{aid1}
\ket{\Psi}=\sum_{\{k\},\{\alpha\}} C_{\{k\},\{\alpha\}} \ket{k_{1},\dots,k_{m};\alpha_{m+1},\dots,\alpha_{M}} \, , \qquad \sum_{\{k\},\{\alpha\}} \modu{C_{\{k\},\{\alpha\}}}^2=1\ ,
\end{equation}
where $\{k\}=(k_1,k_2,\ldots,k_m)$, respectively $\{\alpha\}=(\alpha_{m+1},\alpha_{m+2},\ldots,\alpha_{M})$, is the vector of occupation numbers of the first $m$, respectively second $M-m$ modes and
$$
\ket{\{k\},\{\alpha\}}=\ket{k_{1},\dots,k_{m};\alpha_{m+1},\dots,\alpha_{M}}=\frac{(\hat{a}_1^\dag)^{k_1}\cdots(\hat{a}_m^\dag)^{k_m}(\hat{a}_{m+1}^\dag)^{\alpha_{m+1}}\cdots
(\hat{a}_M^\dag)^{\alpha_M}}{\sqrt{k_1!\cdots k_m!\alpha_{m+1}!\cdots\alpha_M!}}\ket{0}\ .
$$
From Lemma \ref{lemma1}, the separability of $\ket{\Psi}$ is equivalent to
$\langle\Psi\vert A_1A_2\vert\Psi\rangle=\langle\Psi\vert A_1\vert\Psi\rangle\,\langle\Psi\vert A_2\vert\Psi\rangle$ for all choices of $A_{1,2}$ in the commuting subalgebras $\mathcal{A}_{1,2}$.
We should then use this request to derive relations among the coefficients $C_{\{k\},\{\alpha\}}$ that force them to factorize: $C_{\{k\},\{\alpha\}}=C^{(1)}_{\{k\}}\,C^{(2)}_{\{\alpha\}}$.

To this purpose, in a first quantization context, the obvious operators $A_{1,2}$ to be used would be
$A_1=\kb{\{p'\}}{\{p\}}$ and $A_2=\kb{\{\beta'\}}{\{\beta\}}$ and
$A_1 A_2=\kb{\{p'\},\{\beta'\}}{\{p\},\{\beta\}}$.
In the second quantization setting, these operators are replaced by
\begin{eqnarray}
\label{aid2}
&&A_{1} = (\hat{a}^\dag_{1})^{p'_{1}}\dots(\hat{a}^\dag_{m})^{p'_{m}}\,
\left(\dfrac{1}{2\pi i} \oint_{\Gamma(0)} \dfrac{dz}{z-\hat{N}_{1}}\right)\, \hat{a}_{1}^{p_{1}}\dots \hat{a}_{m}^{p_{m}} \\
\label{aid3}
&&A_{2} = (\hat{a}^\dag_{m+1})^{\beta'_{m+1}}\dots(\hat{a}^\dag_{M})^{\beta'_{M}}\,
\left(\dfrac{1}{2\pi i} \oint_{\Gamma(0)} \dfrac{dz}{z-\hat{N}_{2}}\right)\, \hat{a}_{m+1}^{\beta_{m+1}}\dots \hat{a}_{M}^{\beta_{M}}\\
\nonumber
&&A_1A_2 = (\hat{a}^\dag_{1})^{p'_{1}}\dots(\hat{a}^\dag_{m})^{p'_{m}}\,(\hat{a}^\dag_{m+1})^{\beta'_{m+1}}\dots(\hat{a}^\dag_{M})^{\beta'_{M}}\,
\left(\dfrac{1}{2\pi i} \oint_{\Gamma(0)} \dfrac{dz}{z-\hat{N}}\right)\\
\label{aid4}
&&\hskip 7cm \times
\hat{a}_{m+1}^{\beta_{m+1}}\dots \hat{a}_{m+M}^{\beta_{m+M}}\, \hat{a}_{1}^{p_1}\dots \hat{a}_{m}^{p_m}\ ,
\end{eqnarray}
where $p_i,p'_i$ and $\beta_j,\beta'_j$ are  integers, while  $\hat{N}_1=\sum_{k=1}^m\hat{a}^\dag_k\,\hat{a}_k$, $\hat{N}_2=\sum_{j=m+1}^M \hat{a}_j^\dag\,\hat{a}_j$ and $\hat{N}=\hat{N}_1+\hat{N}_2$ are the number operators relative to the two sub-sets of modes and their union, while
$\Gamma(0)$ is a contour around $z=0$ excluding all other integers.
The choice of contour forces the three integrals to vanish unless $z=0$, whence the first two project onto the sub-spaces with no particles in the corresponding sub-sets of modes and the third one onto the vacuum.

Then one calculates
\begin{eqnarray}
\label{aid5}
\bkmv{\Psi}{A_1}{\Psi} &=&
\left( \prod_{i=1}^{m} \sqrt{p_{i}! p'_{i}!} \right) \sum_{\{\alpha\}} \overline{C}_{\{p'\},\{\alpha\}} C_{\{p\},\{\alpha\}}\\
\label{aid6}
\bkmv{\Psi}{A_2}{\Psi} &=&
\left( \prod_{j=m+1}^{M} \sqrt{\beta_{j}! \beta'_{j}!}\right) \sum_{\{k\}} \overline{C}_{\{k\},\{\beta'\}} C_{\{k\},\{\beta\}}\\
\label{aid7}
\bkmv{\Psi}{A_1A_2}{\Psi} &=&
\left( \prod_{i=1}^{m} \sqrt{p_{i}! p'_{i}!} \right)\left( \prod_{j=m+1}^{M} \sqrt{\beta_{j}! \beta'_{j}!}\right) \overline{C}_{\{p'\},\{\beta'\}} C_{\{p\},\{\beta\}}\ ,
\end{eqnarray}
so that
\begin{equation}
\label{constcon}
\overline{C}_{\{p'\},\{\beta'\}} C_{\{p\},\{\beta\}}  =  \left( \sum_{\{\alpha\}} \overline{C}_{\{p'\},\{\alpha\}} C_{\{p\},\{\alpha\}} \right) \left( \sum_{\{k\}} \overline{C}_{\{k\},\{\beta'\}} C_{\{k\},\{\beta\}} \right)\ .
\end{equation}
For $p'=p$ and $\beta'=\beta$ this expression becomes
$$
\modu{C_{\{p\},\{\beta\}}}^2 =  \left( \sum_{\{\alpha\}} \modu{C_{\{p\},\{\alpha\}}}^2 \right)\left( \sum_{\{k\}} \modu{C_{\{k\},\{\beta\}}}^2 \right)\ .
$$
Setting $D_{\{ p \}}=\sum_{\{ \alpha \}} \modu{C_{\{p\},\{\alpha\}}}^2$ and $D'_{\{ \beta \}}=\sum_{\{ k \}} \modu{C_{\{k\},\{\beta\}}}^2$, one rewrites
\begin{equation}
\label{aid8}
C_{\{p\},\{\beta\}} = \sqrt{D_{\{p\}}}\, \sqrt{D'_{\{\beta\}}}\, \e^{i \theta_{\{p\}\{\beta\}}}\ .
\end{equation}
Inserting this expression in \eqref{constcon}, we obtain
$$
\e^{i(\theta_{\{p'\}\{\beta'\})}-\theta_{\{p\}\{\beta\}})}= \sum_{\{ \alpha \}} D'_{\{\alpha\}} \e^{i \left( \theta_{\{p\}\{\alpha\}}-\theta_{\{p'\}\{\alpha\}}\right)}\times \sum_{\{k\}} D_{\{k\}}
\e^{i \left( \theta_{\{k\}\{\beta\}}-\theta_{\{k\}\{\beta'\}}\right)}\ .
$$
Since $\sum_{\{p\}} D_{\{ p \}}=1=\sum_{\{\beta\}} D'_{\{ \beta \}}$, by setting $\beta'=\beta$
one sees that $\theta_{\{p\}\{\beta\}}- \theta_{\{p'\}\{\beta'\}}=\phi_{pp'}$ for all $\beta$.
Fixing an arbitrary $p'$ and inserting this expression into \eqref{aid8} yields
\begin{equation}
\label{aid9}
C_{\{p\},\{\beta\}} = \sqrt{D_{\{p\}}}\,\e^{i \phi_{pp'}} \times \sqrt{D'_{\{\beta\}}}\, \e^{i\theta_{\{p'\}\{\beta\}}}\ .
\end{equation}

In the case of Fermions, the integers $p_i,p'_j$, $\beta_i,\beta'_j$ in (\ref{aid2})-(\ref{aid4}), can not be arbitrarily chosen;
the two subalgebras must be even so that $\sum_{i=1}^m(p_i+p'_i)=2r$ and $\sum_{i=m+1}^M(\beta_i+\beta'_i)=2s$ for suitable integers $r,s$.
Furthermore, since the number operators are sums of quadratic monomials, by series expansion the  integrals appearing
in those equations provide operators that are elements of the even sub-algebras $\mathcal{A}^{ev}_{1,2}$ and $\mathcal{A}^{ev}$.
Therefore, relations \eqref{aid2}--\eqref{aid7} hold true also for Fermions. As in the Boson case, the proof is completed by restricting to $\{p\}=\{p'\}$ and $\{\beta\}=\{\beta'\}$ which guarantees that the even condition is fulfilled,
whence the argument also applies to Fermions.\qed
\end{proof}

\begin{remark}
\label{rem6}
In the proof of the previous proposition nothing depended on having a finite number $m$ of modes in the first set and a finite number $M$ of modes in the second set.
The result thus extends to the case of infinite disjoint sets of modes for all $\ket{\Psi}$ of unit norm.
\end{remark}

\section{First and Second Quantization Approaches Compared}
\label{sec3}

Taking advantage of the previous results, in this section we study the relations  between the first (particle) and second quantization (mode) approach to entangled vector states of identical particle systems.

\subsection{From First to Second Quantization}
\label{sec3.1}

The first quantization approach in \cite{ghigmw2002,GhiMar} deals with two identical particles in a vector state $\ket{\Psi}$ with an infinite dimensional
single particle Hilbert space $\mathbb{H}$. In that approach, vector states are separable if and only if they possess two orthogonal sets of complete properties associated with single particle orthogonal one-dimensional projections $P$ and $Q$.  Fix an orthonormal basis $\{\ket{\psi_j}\}_j$ in $\HI$ and let $\ket{\psi_1}$ be the single-particle vector state
corresponding to the attribution of a complete set of properties to one of the two Bosons: $P=\kb{\psi_1}{\psi_1}$.

In the second quantization approach, let $\hat{a}_j,\hat{a}^\dag_j$ be the annihilation and creation operators of the states $\ket{\psi_j}$; then, a general two-particle vector state can be written as
\begin{equation}
\label{comp0}
\ket{\Psi} = \Big( c_{11} \Big(\cop_{1}\Big)^2 + \cop_{1} \sum_{j>1} c_{1j} \cop_{j} + \sum_{i,j>1} c_{ij} \cop_{i} \cop_{j} \Big) \ket{0}\ .
\end{equation}
Furthermore, the attribution of a complete set of properties  naturally leads to consider the bipartition where $\mathcal{A}_1$ is generated by $\hat{a}_1\,,\,\hat{a}^\dag_1$ and $\mathcal{A}_2$ by the remaining modes:
\begin{equation}
\label{comp1}
\mathcal{A}_{1}= \Big\{ \cop_{1}, \aop_{1} \Big\}\ ,\quad \mathcal{A}_{2} = \Big\{ \cop_{j}, \aop_{j}\Big\}_{j\geq 2}\ .
\end{equation}
Also, the second quantized expression of the projection operator $\mathcal{E}_P$ in \eqref{symmprop2} reads
\begin{equation}
\label{comp2}
\mathcal{E}_{P} = \dfrac{1}{2} \cop_{1}\aop_{1} \left( 3 - \cop_{1}\aop_{1} \right)\ .
\end{equation}
Indeed, acting on the $2$ particle sector, the right hand side does not vanish only on vector states with at least one particle in the first mode, in which case they are eigenstates of the right hand side with eigenvalue $1$ .
Moreover, according to Definition \ref{defloc}, $\mathcal{E}_P$ is local with respect to the bipartition $(\mathcal{A}_1,\mathcal{A}_2)$.

According to Definition \ref{def:ben}, in order to be separable $\ket{\Psi}$ must surely satisfy $\bkmv{\Psi}{\mathcal{E}_{P}}{\Psi}=1$ whence
$\mathcal{E}_{P} \ket{\Psi} = \ket{\Psi}$. Then,
$$
\mathcal{E}_{P}\ket{\Psi} = \left( c_{11} \left(\cop_{1}\right)^2 + \cop_{1} \sum_{j>1} c_{1j} \cop_{j} \right) \ket{0}=\ket{\Psi}
$$
implies that $c_{ij}=0$ when both $i$ and $j$ are different from one, so that
$$
\ket{\Psi} = \cop_{1} \left( c_{11} \cop_{1} + \sum_{j>1} c_{1j} \cop_{j} \right) \ket{0}\ .
$$
We  now distinguish the Fermionic from the Bosonic case.

\subsubsection{Fermions}
\label{sec3.2}

Since there cannot be two Fermions in a same single particle state, $c_{11}=0$, and
$$
\ket{\Psi} = \cop_{1}\,\Big( \sum_{j>1} c_{1j} \cop_{j}\Big)\, \ket{0}=\mathcal{P}(\cop_{1}) \cdot \mathcal{Q}(\cop_{2},\ldots,\cop_{j}\ldots)\ket{0}\ .
$$
Therefore, the vector state $\ket{\Psi}$ can be recast in the form  \eqref{eq:thmben} by means of a monomial in $\hat{a}^\dag_1\in \mathcal{A}_1$  and of a first order polynomial in $\hat{a}_j^\dag\in\mathcal{A}_2$.
According to Proposition \ref{thm:main}, the vector state $\ket{\Psi}$  is thus $(\mathcal{A}_1,\mathcal{A}_2)$-separable, with $\Big( \sum_{j>1} c_{1j} \cop_{j}\Big)\, \ket{0}$ orthogonal to $\cop_{1}\, \ket{0}$.
Thus, every $2$-Fermion vector state which is separable according to the first quantization approach is also  $(\mathcal{A}_1,\mathcal{A}_2)$-separable.

\subsubsection{Bosons}

We have set $a^\dag_1\ket{0}=\ket{\psi_1}$; let $\displaystyle\ket{\Theta}$ denote the normalized vector state $\Big( c_{11} \cop_{1} + \sum_{j\geq2} c_{1j} \cop_{j} \Big) \ket{0}$.
\begin{enumerate}
\item
Case when, in the first quantization approach, $\ket{\Psi}$ is separable because both Bosons are in the same state. \\
In the second quantization approach, this amounts  to $c_{1j}=0$, for all $j\geq 2$. Then, $\ket{\Theta}=\ket{\psi_1}$ and $\displaystyle \ket{\Psi}=\frac{(a^\dag_1)^2}{\sqrt{2!}}\ket{0}$. This state  is as in \eqref{eq:thmben} in terms of a second order monomial in the first mode creation  operator and thus, according to Proposition \ref{thm:main},  $(\mathcal{A}_1,\mathcal{A}_2)$-separable.
\item
Case when, in the first quantization approach, $\ket{\Psi}$ is separable because both Bosons can be attributed a complete set of properties associated with orthogonal single particle vectors.\\
In the second quantization approach, this case is recovered with the choice $c_{11}=0$.
Then, $\bk{\Theta}{\psi_1} = 0$ and one complete set of properties corresponds to $\mathcal{E}_P$ in  \eqref{comp2}, the other one to the projection operator $\mathcal{E}_Q$, where
$$
\mathcal{E}_{Q} = \dfrac{1}{2} \cop_{\Theta}\aop_{\Theta} \left( 3 - \cop_{\Theta}\aop_{\Theta} \right)\ ,\quad \cop_\Theta=\sum_{j\geq2} c_j\cop_j\ ,
$$
with $\sum_{j\geq 2}|c_j|^2=1$.
Indeed, $\mathcal{E}_P$ and $\mathcal{E}_Q$ are commuting projectors and
$$
\bkmv{\Psi}{\mathcal{E}_P}{\Psi}=\bkmv{\Psi}{\mathcal{E}_Q}{\Psi}=\bkmv{\Psi}{\mathcal{E}_P\mathcal{E}_Q}{\Psi}=1\ .
$$
Therefore, $\ket{\Psi}$ is $(\mathcal{A}_1,\mathcal{A}_2)$-separable.
\item
Case when, in the first quantization approach, $\ket{\Psi}$ is entangled because the two parties cannot be simultaneously attributed two complete sets of properties.\\
In the second quantization approach, this case corresponds to $c_{11}\neq 0$ and $c_{1j}\neq0$ for at least one $j\geq 2$.
Indeed, the attribution of another complete set of properties beside the one associated with $\mathcal{E}_P$ in \eqref{comp2}, is equivalent to the existence of another projection
$$
\mathcal{E}_{Q} = \dfrac{1}{2} \cop_{\phi}\aop_{\phi} \left( 3 - \cop_{\phi}\aop_{\phi} \right)\ ,\quad \cop_\phi=\sum_{j\geq2} c^\phi_j\cop_j\ ,\quad \sum_{j\geq 2}|c^\phi_j|^2=1\ ,
$$
such that  $\bkmv{\Psi}{\mathcal{E}_Q}{\Psi}=1$, or $\mathcal{E}_Q\ket{\Psi}=\ket{\Psi}$, where $\hat{a}^\dag_\phi$ acting on the vacuum generates a single particle vector state $\ket{\phi}=\sum_{j\ge2}c^\phi_j\ket{\psi_j}$ orthogonal to $\ket{\psi_1}$
and $\hat{a}^\dag_\phi\hat{a}_\phi$ counts the number of Bosons in such a state.
Using the CCR \eqref{CCR},  one computes
$$
\hat{a}^\dag_\phi\hat{a}_\phi\ket{\Psi}=\Big(\sum_{j\geq2}c_{1j}\,(c^\phi_j)^*\Big)\,\ket{\phi}\ ,
$$
so that
$$
\mathcal{E}_Q\ket{\Psi}=\frac{1}{2}\Big(\sum_{j\geq2}c_{1j}\,(c^\phi_j)^*\Big)\,\ket{\phi}\neq \ket{\Psi}\ ,
$$
unless $\ket{\phi}=\ket{\Psi}$ which is possible only if $c_{11}=0$.
Thus, $\ket{\Psi}$ cannot be written in the factorized form \eqref{eq:thmben}; according to Proposition \ref{thm:main}, it is $(\mathcal{A}_1,\mathcal{A}_2)$-entangled.
\end{enumerate}

\subsection{From Second to First Quantization}

Proposition \ref{thm:main} states that in order to be separable with respect to a bipartition $(\mathcal{A}_1,\mathcal{A}_2)$,  vector states $\ket{\Psi}$ of two identical Bosons must be expressible in one of the following two forms
\begin{eqnarray}
\label{comp3}
\ket{\Psi}&=&\Big(\sum_{i\in I_1} c_{i}\cop_{i}\Big)\cdot \Big(\sum_{j\in I_2} d_{j}\cop_{j}\Big)\,\ket{0}\\
\label{comp4}
\ket{\Psi}&=&\Big(\sum_{i\in I_1} c^{(1)}_{i}(\cop_{i})^2\Big)\,\ket{0}\ ,\ \hbox{or}\quad \ket{\Psi}=\Big(\sum_{i\in I_2} c^{(2)}_{i}(\cop_{i})^2\Big)\,\ket{0}\ ,
\end{eqnarray}
where $\mathcal{A}_{1,2}$ are generated by $\{\hat{a}_i\,,\,\hat{a}^\dag_i\}_{i\in I_{1,2}}$ satisfying the CCR \eqref{CCR} with
$\hat{a}_i^\dag\ket{0}=\ket{i}$ forming an orthonormal basis in the single particle Hilbert space $\HI$.

In a first quantization setting, the vectors  $\hat{a}^\dag_i\hat{a}^\dag_j\ket{0}$, when $i\neq j$, are the
symmetric vectors
$$
\frac{\ket{i}\otimes\ket{j}+\ket{j}\otimes\ket{i}}{\sqrt{2}}\ .
$$
It thus follows, that
$\ket{\Psi}$ in \eqref{comp3} corresponds to the symmetrization of the tensor product of the orthogonal vectors
$$
\ket{\psi_{1}} = \sum_{i\in I_1}c_{i}\,\ket{i}\ ,\quad \ket{\psi_{2}} = \sum_{j\in I_2}\,d_{j}\,\ket{j}\ .
$$
Thus, any $(\mathcal{A}_1,\mathcal{A}_2)$-separable Bosonic state $\ket{\Psi}$ as in \eqref{comp3} is also separable in the first quantization approach for two orthogonal complete sets of properties can be attributed to  both its parties.

The $(\mathcal{A}_1,\mathcal{A}_2)$-separable two Bosons states \eqref{comp4} correspond to separable states in the first quantization approach when
only one coefficient $c^{(1)}_i$ or $c^{(2)}_i$ is not zero, in which case the two parties are in the same state and thus
possess the same complete set of properties.

As specified in Proposition \ref{thm:main}, all $(\mathcal{A}_1,\mathcal{A}_2)$-separable two-Fermion vector states $\ket{\Psi}$ that remain pure on the sub-algebra generated by the even components $\mathcal{A}_{1,2}^{ev}$, are also of the form \eqref{comp3},
where now the creation and annihilation operators $\hat{a}^\dag_j$ and $\hat{a}_j$ satisfy the CAR \eqref{CAR}.
Then, in the first quantization setting, the vectors  $\hat{a}^\dag_i\hat{a}^\dag_j\ket{0}$, when $i\neq j$, correspond to the anti-symmetric vectors
$$
\frac{\ket{i}\otimes\ket{j}-\ket{j}\otimes\ket{i}}{\sqrt{2}}\ .
$$
Therefore, $\ket{\Psi}$ in \eqref{comp3} corresponds to the anti-symmetrization of the tensor product of the orthogonal vectors
$$
\ket{\psi_{1}} = \sum_{i\in I_1}c_{i}\,\ket{i}\ ,\quad \ket{\psi_{2}} = \sum_{j\in I_2}\,d_{j}\,\ket{j}\ .
$$
Then, all two-Fermion $(\mathcal{A}_1,\mathcal{A}_2)$-separable vector states of the form \eqref{eq:thmben} are also separable in the first quantization approach.

\begin{remark}
\label{remfond2}
As already remarked, the first quantization (particle) approach to identical particle entanglement successfully applies to pure states only, whereas the second quantization (mode) approach covers the entire space of states of such systems. While from the previous discussion it may appear that the two approaches are equivalent for pure states, also in this case the mode-based approach is more general. Indeed, the mode description allows to address two-mode Bosonic states as
$$
\frac{\ket{0,1}+\ket{1,0}}{\sqrt{2}}=\frac{\hat{a}_1^\dag+\hat{a}^\dag_2}{\sqrt{2}}\ket{0}\ ,
$$
and to meaningfully claim that entanglement with the vacuum state is present, while the vacuum state is absent from the particle-based approach.
\end{remark}

\section{Conclusions}

In this paper we have considered an issue of entanglement theory which is still not settled,  namely, the characterization of non-local correlations in systems of identical particles, which is important not only in quantum many-body theory, but also in metrological
applications where one aim at using quantum entanglement in order to achieve sub shot-noise sensitivities \cite{benffmJPB,BeBr}.
In particular, we have compared two approaches to such an issue in the case of systems consisting of two Bosons or Fermions.
The first approach is based on the particle aspect of first quantization and on the attribution of complete sets of properties to both constituents, simultaneously.
Within this approach, which holds only for vector states, symmetrizing (anti-symmetrizing) tensor products of orthogonal single particle vector states of Bosons (Fermions), as required by their identity, does not lead to any entanglement.

The second approach is based on the mode aspect of second quantization and holds for all possible states of identical particle systems, be they pure or mixed.
Within this approach one can generalize the notions of local observables and of separable states by referring to pairs of sub-algebras constructed by means of
sets of creation and annihilation operators that commute or anti-commute when they are chosen from different sub-sets.
A difference immediately emerges between Bosons and Fermions for, in the latter case, there are anti-commuting observables while locality and separability are notions that make sense in relation to products of commuting observables.
These two approaches have been compared in the case of vector states and it has been proved that all vector states which result separable in the first quantization approach are such also in the second quantization setting.

\end{document}